\newcommand{\hlc}[2]{\colorbox{#1}{#2}}
\definecolor{lgreen}{HTML}{B2CEB5}
\definecolor{lred}{HTML}{EBB6B5}
\definecolor{lpurple}{HTML}{C0B4DC}
\newcommand{\sinv}{\sigma^{-1}}
\DeclareMathOperator*{\argmax}{arg\,max}
\theoremstyle{plain}
\newtheorem{theorem}{Theorem}[section]
\newtheorem{corollary}[theorem]{Corollary}
\theoremstyle{definition}
\newtheorem{definition}[theorem]{Definition}
\theoremstyle{remark}
\newtheorem{observation}[theorem]{Observation}
\newtheorem{empobservation}[theorem]{Empirical Observation}
\icmltitlerunning{Information Loss and Disparate Effects in Network Embeddings}
\begin{document}

\twocolumn[
  \icmltitle{Information Loss and Disparate Effects in Network Embeddings}
  \icmlsetsymbol{equal}{*}

  \begin{icmlauthorlist}
    \icmlauthor{Gabriel Chuang}{col}
    \icmlauthor{Augustin Chaintreau}{col}
  \end{icmlauthorlist}

  \icmlaffiliation{col}{Department of Computer Science, Columbia University, New York, USA}

  \icmlcorrespondingauthor{Gabriel Chuang}{gtc2117@columbia.edu}

  \icmlkeywords{Machine Learning, Fairness, Network Embeddings}

  \vskip 0.3in
]



\printAffiliationsAndNotice{}  

\begin{abstract}
An extensive line of work studies fairness interventions for network embeddings, but less is known about their baseline behavior. In this work, we ask: how do baseline embeddings (without fairness interventions) produce disparate effects at the representation level? We analyze the asymptotic behavior of low-dimensional embeddings on stochastic block model (SBM) graphs, which encode both homophily and group structure. We characterize exact conditions under which embeddings cause information loss, showing that the amount of information loss depends directly on the graph’s density and assortativity. Notably, very different graphs can produce identical embeddings in the limit, and this non-invertibility disproportionately affects smaller and sparser communities. As a result, simple downstream tasks, such as link prediction, introduce higher error rates for these communities, helping explain disparities widely observed in practice.
\end{abstract}

\section{Introduction}
Low-dimensional network embeddings are powerful tools for doing machine learning tasks on graph or network data
(e.g., \cite{perozzi2014deepwalk,tang2015line,grover2016node2vec,qiu2018network}, for an overview, see \cite{cui2018survey}). 
These techniques have shown extremely strong performance in tasks such as link prediction, community detection, and node classification. 
A line of work aims to encode notions of fairness into network embeddings, including FairWalk \cite{rahman2019fairwalk}, Debayes \cite{buyl2020debayes}, and many others \cite{khajehnejad2022crosswalk, spinelli2021fairdrop}. 
These techniques achieve empirical success at various measures of fairness, such as group-level statistical parity. This series of work is predicated on the empirical observation that baseline network embeddings often have unequal error rates across groups.

Orthogonally, recent work has shown theoretical limits on the representation power of arbitrary low-dimensional embeddings \cite{seshadhri2023limitations, seshadhri2020impossibility, snoeck2025compressibility}: any embedding must lose some information from the underlying graph. Davison \& Austern \yrcite{davison2023asymptotics} show that the the training algorithm is another factor in information loss, showing examples of very low-rank graphons that cannot be learned even by high-dimensional embeddings if trained using subsampling and dot-product based loss functions. That is, the ``default'' way of training network embeddings, by using inner products, maps multiple distinct graphs (or graphons) to identical embeddings: these embeddings are non-invertible.

In this work, we connect these two threads, showing that this information-loss-induced non-invertibility is closely related to the disparate error rates observed in the fairness literature. 

Like Davison \& Austern \yrcite{davison2023asymptotics}, we consider learning the generative model rather than a specific instance (i.e., edge set) generated from the model. In particular, we focus on stochastic block model (SBM) graphs \cite{holland1983stochastic}, which capture homophily (nodes tend to be connected to similar nodes) and majority/minority group structure, while being simple enough to make analysis tractable. Similarly, we focus on relatively simple dot-product-based embedding schemes, which allow for cleaner theoretical analysis.

\begin{figure*}[!tbp]
    \centering
    \includegraphics[width=0.9\linewidth]{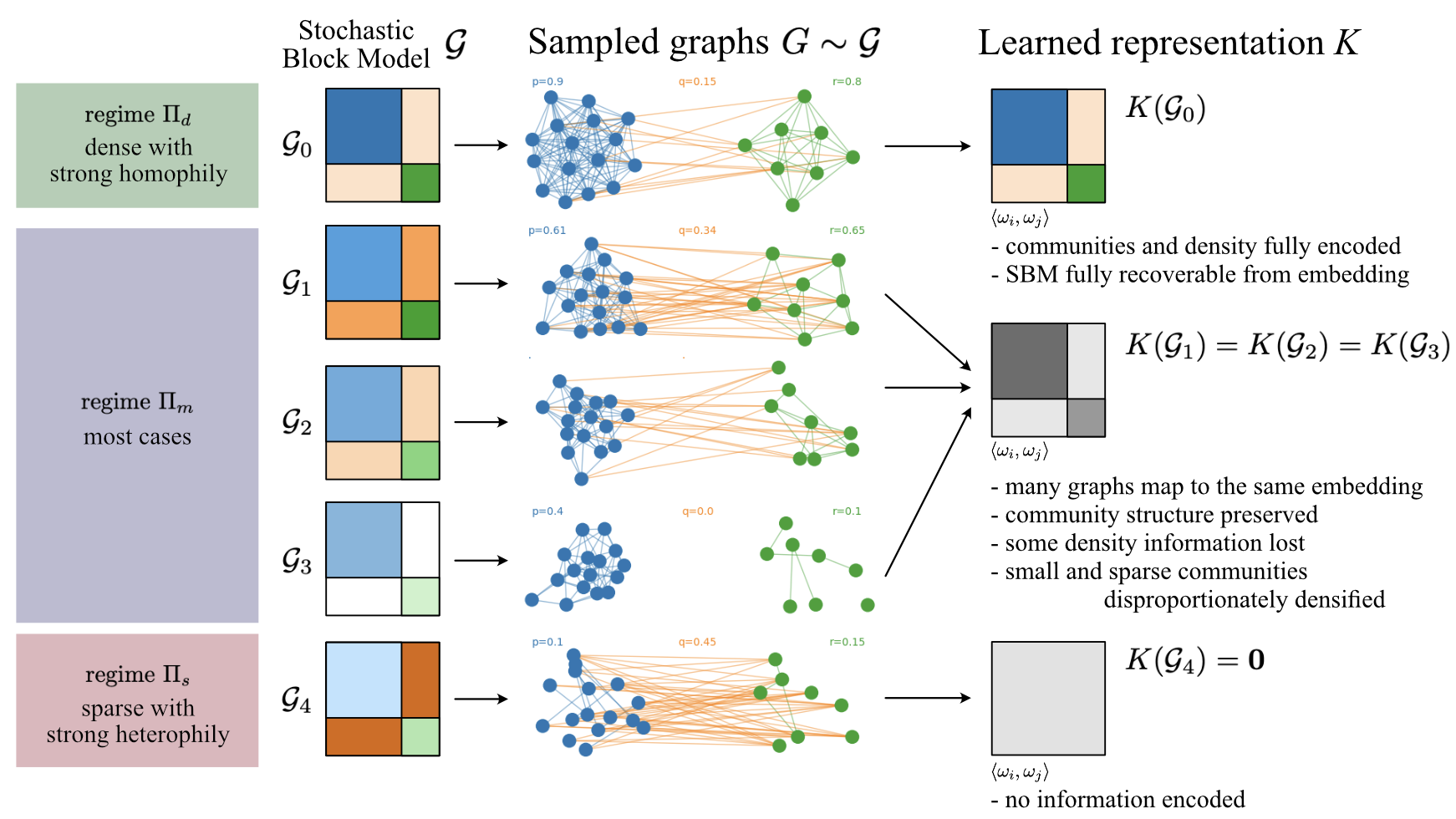}
    \caption{We characterize conditions under which the embedding process encodes full, partial, or no information about the input stochastic block model (SBM). In a \hlc{lgreen}{dense regime $\Pi_d$}, the embedding learns (perfectly) the information of the SBM. In a \hlc{lred}{sparse regime $\Pi_s$}, the embedding is degenerate and loses all information. In \hlc{lpurple}{most cases $\Pi_m$}, community structure is encoded but only some edge density information is retained; in this regime, very different SBMs map to the same representation in a way that leads to disparate group-level effects.}
    \label{fig:bigfig}
\end{figure*}

\subsection{Contributions.}

Our main contributions are as follows. 

\begin{enumerate}
    \item The amount of information lost when embedding a SBM depends on its density and assortativity, but \emph{not} the sizes of the communities. \textbf{Dense, assortative graphs preserve all information; heterophilic and sparse SBMs lose all information; and most graphs lie in an intermediate regime where only some information is preserved.} We exactly characterize these regimes in the 2-community case (Theorem~\ref{thm:embedding-limit-regimes}).
    \item In the intermediate regime, many SBMs receive identical embeddings, inducing a set of equivalence classes. We show that each class contains \emph{very} different graphs, ranging from very sparse to very dense; as such, \textbf{sparse graphs can ``look like'' much denser graphs when embedded (and vice versa)}. We give an analytic characterization of these classes for 2-community SBMs (Theorem~\ref{thm:eqclasses}).
    \item \textbf{Communities are disparately affected by non-invertibility, based on size and sparsity}.  Smaller and sparser communities are disproportionately ``densified'' by generally embedding them to smaller, more tightly clustered regions of representation space. This introduces greater error for these communities in tasks such as link prediction (by overpredicting edge probabilities of these smaller/sparser groups). This particular mechanism of disparate error is visible in past work but is not (to our knowledge) explicitly pointed out. This leads directly to disparate error rates in downstream tasks like link prediction. 
\end{enumerate}




\section{Background} 

\subsection{Limitations of low-dimensional graph representations}

For a survey of recent work on the limitations of graph embeddings, see \cite{seshadhri2023limitations}. For example, \cite{seshadhri2020impossibility} show that low-dimensional inner product based embeddings cannot encode both low degree and high clustering coefficient, and \cite{snoeck2025compressibility} shows that neighborhood preservation in arbitrary metric spaces requires very high embedding dimension. \cite{chanpuriya2021deepwalking} develop a method for approximately \emph{inverting} graph embeddings (that is, finding a graph that maps to a given embedding); they find that both global properties (like triangle density) and local properties (like existence of a particular edge) are often lost, but that the community structure is typically preserved and sometimes even enhanced (i.e., the conductance within the community is decreased). Our work provides a potential explanation for this behavior, by characterizing a family of SBM graphs (of varying community density) that are embedded identically. On the other hand, Chanpuriya et al \yrcite{chanpuriya2020node} show that many of these representation limits are a consequence of learning only \emph{one} embedding vector for each node, and prove that learning ``word'' and ``context'' vectors allow for substantially more expressive embeddings. 


A recent paper by Davison \& Austern \yrcite{davison2023asymptotics} develops methods for showing the limiting distributions of embeddings that are based on minimizing subsampling-based losses on exchangeable graphs (i.e., graphons). They show that unique loss-minimizing embeddings exist (up to symmetry), and prove statistical rates of convergence to these asymptotic minimizers. 
They show a simple case where an inner-product-based embedding is non-invertible, and prove that using an alternative indefinite inner product resolves these invertibility problems in a way that the standard inner product cannot. However, the standard inner product is likely to remain prevalent in practice, so their work leaves unanswered many important questions about the behavior of representations in practice, which we explore in this work. 




\subsection{Fair network embeddings}

It is well-established that network algorithms often introduce or reinforce structural fairness issues (e.g., \cite{masrour2020bursting}; for a survey, see \cite{dong2023fairness}), and a long line of work aims to address this in network embeddings, including by modifying the sampling procedure \cite{rahman2019fairwalk}, preprocessing the graph \cite{spinelli2021fairdrop} adding constraints to the objective function \cite{bose2019compositional}, and many others (e.g., \cite{buyl2020debayes, li2021dyadic}. Many of these works propose measures of fairness, either for the embeddings themselves or via downstream tasks. In this work, we do \emph{not} consider a particular measure of fairness, nor do we consider any particular fairness intervention. Instead, we seek to understand the mechanism by which baseline algorithms introduce disparities.  

In parallel, there has been substantial work on theoretical expressivity limits on Graph Neural Networks (GNNs), for example, see \cite{garg2020generalization}. The fundamentally different task-specific objectives and structure of GNNs lead to expressivity results that are orthogonal to what we describe here. 
\subsection{Stochastic Block Models} 
Stochastic block models (SBMs) \cite{holland1983stochastic} are a popular generative model for graphs, generalizing standard Erd\H{o}s-Renyi graphs to represent community structure (each ``block'' is a community) and homophily (a homophilous SBM has higher within-block probabilities than between-block probabilities). 

\begin{definition}[$k$-block Stochastic Block Model]
    A $k$-block stochastic block model is a generative model for graphs $\mathcal G = (P, S)$, where $P$ is a $k\times k$ symmetric matrix of block edge probabilities, and $S$ is a list of block sizes $S = [s_1, \cdots s_k]$ (adding up to 1). 
\end{definition}

\subsection{Our Setting}
Throughout, we will use $\sigma(x) = \frac{e^x}{1+e^x}$ to denote the standard logistic function and $\sigma^{-1}(x) = \ln(x)-\ln(1-x)$ to denote its inverse (the logit function). We use $\langle \cdot, \cdot \rangle$ to denote the standard inner product. 

Following \cite{davison2023asymptotics}, we consider the general node embedding algorithm shown in Algorithm~\ref{alg:embedding_generic}, which depends on a choice of stochastic subsampling scheme $S$ and edge-wise loss function $\ell$ that depends on the inner products of embeddings. Many network embedding algorithms (e.g., node2vec \cite{grover2016node2vec}, DeepWalk \cite{perozzi2014deepwalk}) are instantiations of this structure, mainly with variations in the sampling scheme $S$. 

\begin{algorithm}[!h]
\caption{Node embedding for an SBM (generic)}
\textbf{Params:} Graph subsampling scheme $S$, loss function $\ell$ \\
\textbf{Input:}  SBM $\mathcal G$, embedding dimension $d$, graph size $n$ \\
\textbf{Output:} vertex embeddings $\omega_i \in \mathbb R^d : i = 1, \cdots |V|$ \\
\begin{algorithmic}[1]
\STATE Initialize $w_i \in \mathbb R^d$ in some way (e.g., uniform) 
\WHILE{convergence not reached}
    \STATE Sample graph $G \sim_n \mathcal G$ with $G = (V,E), |V| = n$.
    \STATE Sample subgraph $s = S(G)$
    \STATE Compute loss $\ell(s, (\omega_1, \cdots \omega_{|V|}))$ 
    \STATE Perform a gradient update to $\{\omega_1, \cdots, \omega_n\}$.
\ENDWHILE 
\end{algorithmic}
\label{alg:embedding_generic}
\end{algorithm}

We use the following choice of sampling scheme and loss function: 

\begin{definition}[Uniform Vertex Sampling]
The graph subsampling scheme $S$ uniformly selects $v$ vertices of $G$ and returns the induced subgraph. 
\end{definition}
\begin{definition}[Cross-Entropy Loss]
    $\ell$ sums the pairwise cross-entropy loss of each pair of vertices. $\ell_\text{pw}(y,x) = -x\log(\sigma(y)) - (1-x) \log (1-\sigma(y))$.
\end{definition}


Algorithm~\ref{alg:embedding_generic} minimizes the empirical risk function 
\begin{equation}
  \begin{aligned}
    &\mathcal{R}_n(\omega_1, \cdots, \omega_n) =  \\ 
    &\sum_{i,j \in [n], i\neq j}  \bigg(\Pr((i,j) \in S(G_n)\mid G_n) \ell_{\text{pw}}(\langle\omega_i, \omega_j\rangle, a_{ij})\bigg).
    \end{aligned}
    \label{eq:empirical-risk}
\end{equation}

Davison \& Austern \yrcite{davison2023asymptotics} show that the minimizers of the empirical risk (\ref{eq:empirical-risk}) are unique up to symmetry/rotation, i.e., that, for a given graph $\mathcal G_n$ and sampling scheme $S$, there exists a unique inner-product (kernel) matrix $K(\mathcal G) \in \mathbb R^{n\times n}$ such that all loss-minimizing $\hat \omega_i$ satisfy 
\[\frac{1}{n^2} \sum_{i,j} |\langle \hat \omega_i, \hat \omega_j \rangle - K(\mathcal G)_{i,j}| = o_p(1). \]
That is, it converges to zero with high probability. 

In this sense, $K(\mathcal G)$ is ``the embedding of $\mathcal G$'' just as much as $\{\omega_i\}_{i \in [n]}$ is: since training depends only on the pairwise inner products, the unique minimizer is fully captured by the kernel. Therefore, we seek to analytically characterize the kernel $K(\mathcal G)$ of these limiting embeddings, find closed forms (where they exist) and determine what families of graphs have the same $K$. This allows us to be agnostic to the dimension of the embedding: our results, like those of \cite{davison2023asymptotics}, are true even for arbitrarily high dimension.

\section{A motivating example}

\begin{figure}
    \centering
    \includegraphics[width=0.49\linewidth]{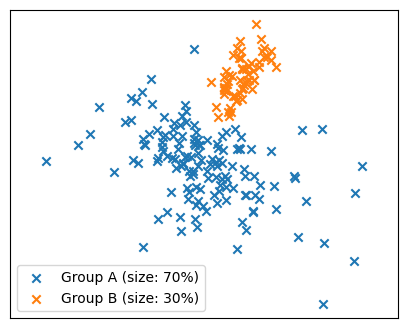}
    \includegraphics[width=0.49\linewidth]{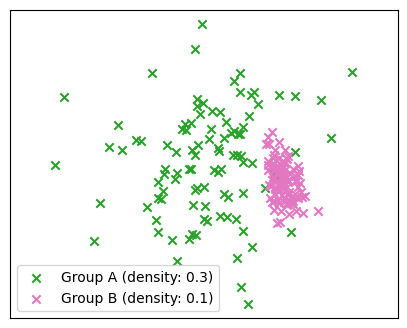}
    \caption{Two motivating examples. \emph{(a), left:} embedding two same-density, different-size groups gives a representation that makes the smaller group seem much more dense. \emph{(b), right:} similarly, sparser groups are made to appear more dense.}
    \label{fig:motivating}
\end{figure}

Suppose we want to embed the two-block SBM $p = [0.1, 0; 0, 0.1] 
, S = (0.7, 0.3)$: two disconnected, equally-dense communities of different sizes. Intuitively, we could expect something like Fig.~\ref{fig:motivating}(a): the larger community gets more ``volume'' in embedding space. This is, in fact, the minimizer of the loss (Eq.~\ref{eq:empirical-risk}). The difference in group sizes causes the smaller community to be ``squished'' into a smaller region of space, \emph{even though both communities are equally dense in the original graph}. 

We can intuitively see why this is the case by imagining a counterfactual where both communities receive equal amounts of ``volume''. Since the graph is relatively sparse (95\% non-edges), most of the loss comes from non-edges whose endpoints are nevertheless close in embedding space. The larger group has many more such non-edges (scaling quadratically), so the marginal benefit of giving the larger group more ``volume'' is high compared to the marginal cost of squishing the smaller group. 

Counterintuitively, the same ``squishing'' effect also occurs when one group is \emph{sparser} than the other (shown in Fig.~\ref{fig:motivating}(b)): the sparser community gets less space and so \emph{also} appears denser. 

Although it is not pointed out explicitly, evidence of this can be seen by revisiting the baseline results throughout the fair embeddings literature. For example,  in FairWalk \cite{rahman2019fairwalk}, the ``baseline'' results show that the edges predicted for the minority group (gender 1, 37\% of the population) are twice as dense as for the majority group, despite having comparable initial densities. 

This intuition is what we seek to understand in a precise way in the following sections. When do particular communities lose information when embedded, and at what rates? What governs this information loss? 



\section{Information Loss}

\subsection{Exact characterizations for 2-community SBMs}
First, we consider two-block SBMs. For simplicity, we will write the SBM as $P = \begin{bmatrix}
    p & q \\ q & r
\end{bmatrix}, S = (a, 1-a)$ and let $a \geq \frac12$: that is, the graph has one larger group (size $a$) with intra-group probability $p$; one smaller group (size $1-a$) with probability $r$; and inter-group probability $q$. 

There are three regimes of information loss as a function of $(p,q,r)$, depicted in Fig.~\ref{fig:regions-split}: 

\begin{figure}
    \begin{subfigure}{0.25\linewidth}
        \includegraphics[width=\linewidth]{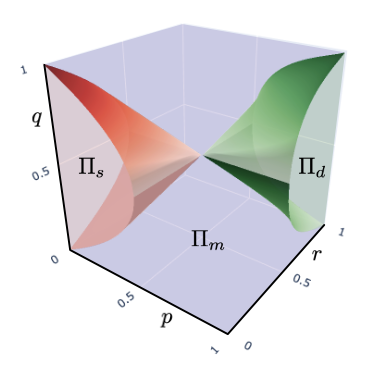}
    \end{subfigure}
    ~
    \begin{subfigure}{0.75\linewidth}
        \includegraphics[width=\linewidth]{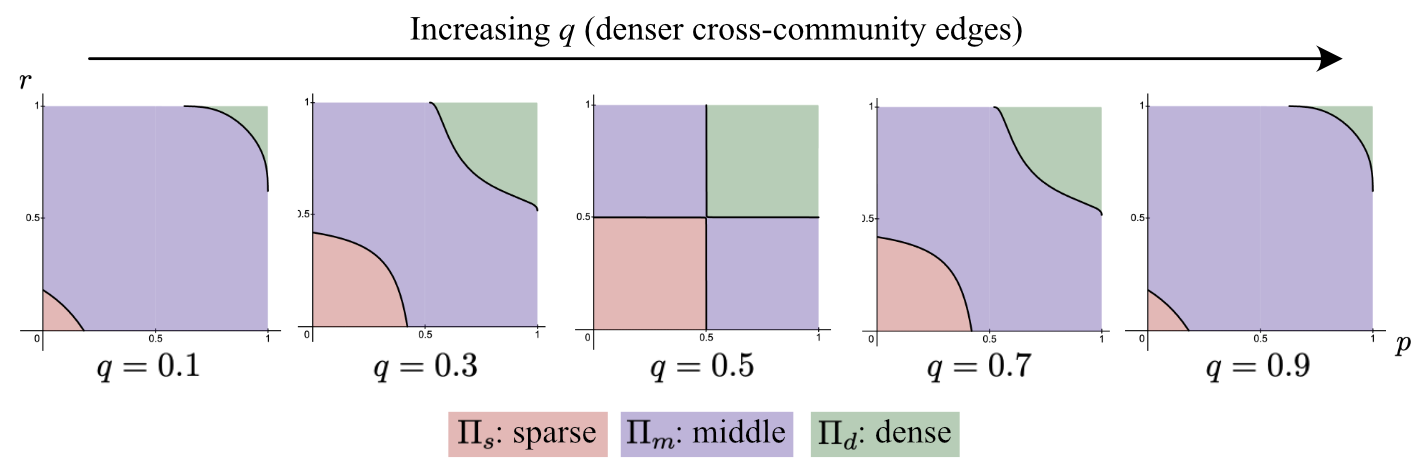}
    \end{subfigure}

    \centering
    \caption{Regions of 2-block SBM parameter space in which \hlc{lred}{none}, \hlc{lpurple}{some}, or \hlc{lgreen}{all} of the information is preserved in the embedding.}
    \label{fig:regions-split}
\end{figure}

\begin{itemize}
    \item $\Pi_d$: $p \geq \frac12, r \geq \frac12$, and  $\sigma^{-1}(q)^2 \leq \sigma^{-1}(p) \sigma^{-1}(r)$  (the ``dense, assortative'' regime);
    \item $\Pi_s$: $p \leq \frac12, r\leq \frac12$, and $(\frac12-q)^2 \leq (\frac12-p)(\frac12-r)$ (the ``sparse, heterophilic'' regime); and 
    \item $\Pi_m$: $[0,1]^3 \setminus \Pi_d \setminus \Pi_s$, (the ``middle'' regime).
\end{itemize}

Our results can be summarized as follows (with formal statement in Theorem 1): 

\textbf{Dense communities $\Pi_d$ (Theorem 1.1):} When the two communities are sufficiently dense (many intra-community edges relative to the between-community edges) and the inter-community density is near half, the embedding preserves (fully) the information of the SBM (i.e., the embedding is invertible). 

\textbf{Sparse/heterophilic communities $\Pi_s$ (Theorem 1.2):} When the two communities are \emph{sparse} relative to the cross-edges, the embedding becomes degenerate (maps all nodes to the zero vector, or similar). 
Intuitively, the cross-community (heterophilic) edges overpower the intra-community (homophilic) edges, so both the community structure and the edge density information is lost by the embedding. 

\textbf{Intermediate region $\Pi_m$ (Theorem 1.3):} Otherwise, the embedding preserves only partial information: the mapping of nodes to communities is encoded, but the densities of the communities are only encoded in a relative way. This lossy embedding also implies that, for each embedding, there is an equivalence class of SBMs that all map to it, which we discuss in Section~\ref{sec:eqclasses}. 


We formally state this result as follows. 

\begin{restatable}[Information Loss Regimes]{theorem}{embeddinglimitregimes}
Let $\mathcal G$ be a 2-block SBM($[p,q;q,r], (a, 1-a))$), and let $\omega_i$ be the embeddings learned by Algorithm 1 at convergence. Then, the kernel matrix $K(\mathcal G) \in \mathbb R^{n\times n}$ is a block-constant matrix: 
\[
K(\mathcal G) =
\left[\begin{array}{@{}l l@{}}
    K_1 \mathbf{1}_{an \times an} & K_2 \mathbf{1}_{an \times (1-a)n} \\[6pt]
    K_2 \mathbf{1}_{(1-a)n \times an} & K_3 \mathbf{1}_{(1-a)n \times (1-a)n}
\end{array}
\right],\]

where the values of $K_i \in \mathbb R$ depend on $a,p,q,r$ as follows:
\begin{enumerate}
    \item If $p, r \geq 1/2$ and $\sigma^{-1}(q)^2 < \sigma^{-1}(p) \sigma^{-1}(r)$, then  $K_1 = \sigma^{-1}(p), K_2 = \sigma^{-1}(q), K_3 = \sigma^{-1}(r)$. 

    \item Otherwise, if $p, r \leq \frac12$ and $(\frac12-q)^2 \leq (\frac12-p)(\frac12-r)$, then $K_1 = K_2 = K_3 = 0$. 

    \item Otherwise, 
    \[K_2 = \begin{cases} 
        -\sqrt{K_1K_3} & \text{if } q\leq \frac12 \\ 
        +\sqrt{K_1K_3} & \text{if } q>\frac12 \end{cases}, \] 
    with $K_1, K_3$ satisfying: 
    \begin{equation}
        \begin{split}
            a^2(\sigma(K_1)-p) - \mu K_3 &= 0  \\ 
            (1-a)^2(\sigma(K_3)-r) - \mu K_3 &= 0 \qquad  \mu \geq 0\\ 
            a(1-a)(\sigma(K_2) - q) + \mu K_2 &= 0 
        \end{split}
        \label{eq:intermediatecase}
    \end{equation}
\end{enumerate}
\label{thm:embedding-limit-regimes}
\end{restatable}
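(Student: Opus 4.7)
The strategy is to reduce the infinite-dimensional problem to a three-parameter convex program and analyze its KKT conditions. I would first invoke the uniqueness result of Davison et al.~\cite{davison2023asymptotics}: since $K(\mathcal G)$ is uniquely determined and the SBM graphon is invariant under any permutation preserving the two communities, $K(\mathcal G)$ must itself share this invariance, which forces the block-constant form $(K_1, K_2, K_3)$ claimed in the theorem.

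Substituting this ansatz into the empirical risk \eqref{eq:empirical-risk} and passing to $n \to \infty$---using that under uniform vertex sampling $\Pr((i,j)\in S(G_n)\mid G_n)$ is constant across pairs, and that $\ell_{\text{pairwise}}$ is affine in its second argument so a law of large numbers replaces $a_{ij}$ by its Bernoulli mean---the minimization collapses to
\[F(K_1, K_2, K_3) \;=\; a^2\,\ell_p(K_1) + 2a(1-a)\,\ell_q(K_2) + (1-a)^2\,\ell_r(K_3),\]
where $\ell_x(y) = -x\log\sigma(y) - (1-x)\log(1-\sigma(y))$ is convex in $y$. The Gram-matrix constraint on $K$, for a block-constant $K$, compresses exactly to the $2\times 2$ PSD condition $K_1, K_3 \geq 0$ and $K_2^2 \leq K_1 K_3$. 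We thus have a convex program in three scalar variables.

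Since $\ell_x'(y) = \sigma(y) - x$, the unconstrained stationary point is $K_i^* = \sigma^{-1}(\cdot)$. Case 1 ($\Pi_d$) is immediate: $p, r \geq 1/2$ is precisely $K_1^*, K_3^* \geq 0$, and $\sigma^{-1}(q)^2 \leq \sigma^{-1}(p)\sigma^{-1}(r)$ is precisely unconstrained PSD feasibility. For Case 2 ($\Pi_s$) the candidate is the cone apex $K = 0$, where the constraint $K_2^2 \leq K_1 K_3$ has vanishing gradient; the usual KKT analysis is silent about $K_2$, so I would instead apply the first-order optimality condition for convex minimization over a convex cone directly: $0$ is optimal iff $\langle \nabla F(0), \Delta K\rangle \geq 0$ for every PSD $\Delta K$. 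Identifying $\nabla F(0)$ with the symmetric $2\times 2$ matrix $M$ with diagonal entries $a^2(1/2-p), (1-a)^2(1/2-r)$ and off-diagonal $a(1-a)(1/2-q)$, and using self-duality of the PSD cone, this reduces to $M \succeq 0$, whose three defining inequalities are exactly those of $\Pi_s$. Case 3 ($\Pi_m$) is the complement: the PSD constraint must be active with multiplier $\mu > 0$ and $K_1, K_3 > 0$, so $K_2^2 = K_1 K_3$ and Lagrange stationarity gives precisely \eqref{eq:intermediatecase}. The sign of $K_2$ follows by rewriting the third equation as $\mu K_2 = a(1-a)(q - \sigma(K_2))$: with $\mu > 0$, $\mathrm{sign}(K_2) = \mathrm{sign}(q - \sigma(K_2))$, and a quick check on whether $q$ lies above or below $1/2$ rules out the inconsistent sign and yields $K_2 = \pm\sqrt{K_1 K_3}$ as stated.

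The main obstacle is Case 2: the feasible set has a conical singularity at the origin, so standard Lagrange multipliers degenerate and give no information on $K_2$. The dual-cone/self-duality reformulation above bypasses this and recovers the exact boundary $(1/2-q)^2 \leq (1/2-p)(1/2-r)$. A secondary bookkeeping step is to check consistency at the region boundaries---at $\sigma^{-1}(q)^2 = \sigma^{-1}(p)\sigma^{-1}(r)$ the system \eqref{eq:intermediatecase} degenerates to $\mu = 0$ and recovers Case 1, and at $(1/2-q)^2 = (1/2-p)(1/2-r)$ with $p, r \leq 1/2$ it admits $K_1 = K_3 = 0$ and recovers Case 2. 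Convexity of the program then guarantees uniqueness of the minimizer in each case.
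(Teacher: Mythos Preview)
Your plan is correct and matches the paper's proof in structure: both reduce to the same three-variable convex program over the $2\times 2$ PSD cone and analyze the KKT system case-by-case, with the origin handled separately because constraint qualification fails there. The one substantive difference is your treatment of the origin: you invoke self-duality of the PSD cone to reduce optimality of $K=0$ to the single condition $M\succeq 0$, whereas the paper explicitly parametrizes feasible directions as $(x^2,\rho xz,z^2)$ with $|\rho|\le 1$ and checks the directional derivative by hand---these are mathematically equivalent, but your formulation is cleaner and makes the appearance of the determinant condition $(1/2-q)^2\le(1/2-p)(1/2-r)$ immediate. The paper is slightly more careful about the remaining LICQ-failure points $(K_1,0,0)$ and $(0,0,K_3)$ and the edge cases $K_1=0$ or $K_3=0$ with the other nonzero (all of which force $q=1/2$), which you fold into ``boundary consistency''; if you flesh out the proof you should treat these explicitly.
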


Proof in Appendix~\ref{appx:proof1}. A few notable observations: 

\begin{itemize}
    \item The regions of parameter space do \emph{not} depend on $a$, but only on $p,q,r$. That is, the information loss depends only on edge densities, not the size of the communities themselves. 
    \item Most ``real-world'' graphs fall in the intermediate region $\Pi_m$: they have both sparsity and homophily, with within-group edges typically denser than between-group edges. It is rare to have strongly heterophilic communities ($\Pi_s$) or extremely-dense graphs $(\Pi_d)$. 
    \item Community structure information is generally preserved in the embedding, even as degree/density information is not. 
    \item The sparse, information-loss region $\Pi_s$ is convex, while the dense, information-preserving region $\Pi_d$ is not (especially when $q$ is small). 
\end{itemize}

\subsection{Numerical characterizations for $k$-block SBMs}
\label{sec:numerical-1}

\begin{figure}
    \centering
    \includegraphics[width=\linewidth]{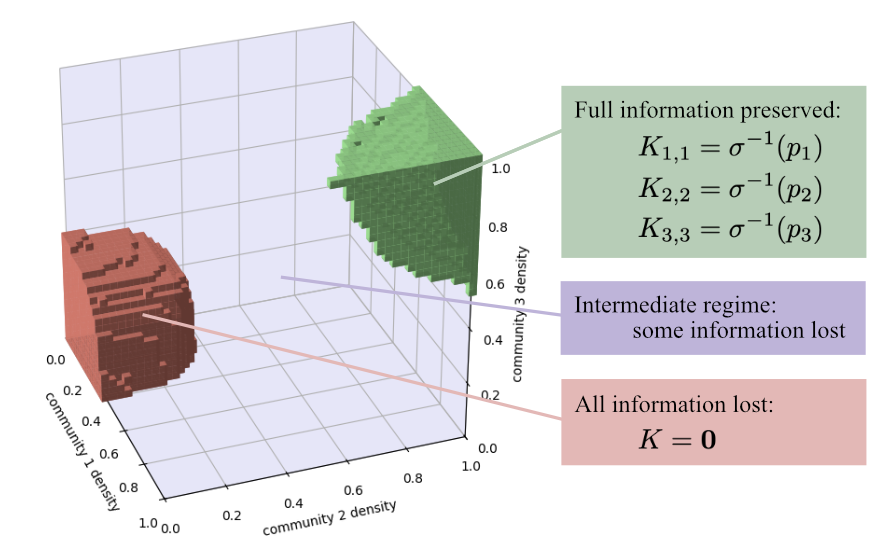}
    \caption{A 3-dimensional slice of the 6-dimensional parameter space of 3-block SBMs, showing SBMs with within-group edge probabilities of $(x,y,z)$ and between-group probability $0.3$. The regions of full, partial, and no information loss closely resemble the 2-block case.}
    \label{fig:numerical-regions}
\end{figure}

Finding an analogous exact characterization of information loss regimes in higher-order SBMs is difficult, largely due to the positive semi-definiteness constraint on $K$, the kernel matrix.\footnote{The proof of Theorem~\ref{thm:embedding-limit-regimes} relies on the algebraic inequalities equivalent to the PSD condition for $2\times 2$ matrices; similar inequalities exist for $3\times 3$ matrices, though they are substantially more complex. There are no algebraic inequalities equivalent to PSD for $4\times 4$ matrices and above.} Instead, we numerically evaluate minimizers of the empirical risk (Eq.~\ref{eq:empirical-risk}) for SBMs with more than two communities.

We find similar regions of full information preservation for dense communities, full information loss for sparse communities, and an intermediate region. The dense region, in particular, is characterized by the natural extension of the $\Pi_d$ condition from the 2-block case. 

\begin{empobservation}
\label{obs:PSD_condition}
A $k$-block SBM with probability matrix $P$ is embedded with no information loss whenever $\sigma^{-1}(P)$ (applied elementwise) is positive semi-definite.
\end{empobservation}


For the 3-community case (which has a 6-dimensional parameter space), we show a representative slice in Fig.~\ref{fig:numerical-regions}. In particular, the region boundaries closely resemble 3-dimensional analogues of the 2-dimensional boundary lines in Fig.~\ref{fig:regions-split}. These higher-order boundary regions have the same broad dependencies on ratios between within- and between-block probabilities, as well as convexity patterns. We describe these patterns and give further illustrative examples in Appendix~\ref{appx:numerical_infoloss}.

\section{Identically-embedded SBMs} 
\label{sec:eqclasses}
\begin{figure*}[!tbp]
    \begin{subfigure}[t]{0.3\linewidth}
        \vspace{0pt}
        \includegraphics[width=0.95\linewidth]{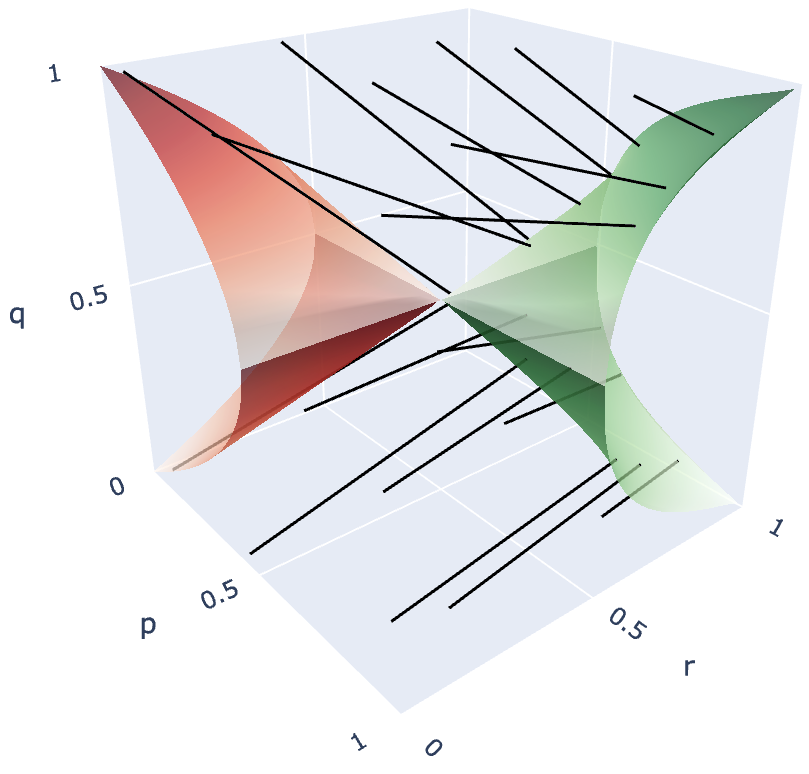}
        \caption{The equivalence classes of identically-embedded graphs are lines within $\Pi_m$. The slope of each line is controlled by the parameter $\eta$.}
        \label{fig:lines-of-eq-classes}
    \end{subfigure}
    \hfill 
    \begin{subfigure}[t]{0.68\linewidth}
        \vspace{0pt}
        \includegraphics[width=\linewidth]{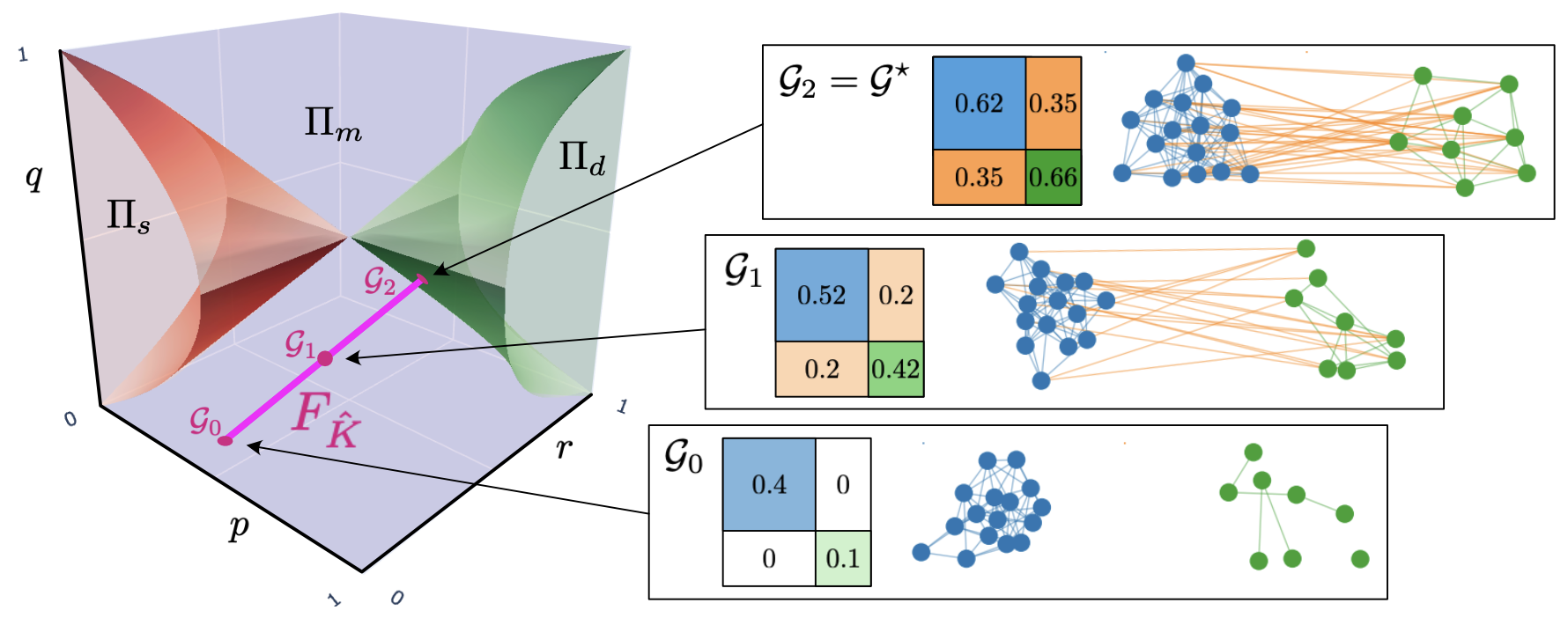}  
        \caption{An example of one such equivalence class, for the embedding $\hat K = (0.49, -0.59, 0.71)$ with $a=0.66$, resulting in $\eta=0.62$. All points along the {\color{magenta} $F_{\hat K}$ line} have the same embedding, including graphs that are both maximally-dense ($\mathcal G^\star$, on the boundary of $\Pi_d$) and maximally-sparse ($\mathcal G_0$).}
        \label{fig:one-eqclass}
    \end{subfigure}
    \caption{Examples of equivalence classes in $(p,q,r)$ space for $a=0.66$.}
\end{figure*}

Most realistic networks lie in the intermediate regime $\Pi_m$, where some (but not all) information is lost. This information loss induces \emph{equivalence classes} of input graphs, where two graphs are in the same class if they receive the same embedding. What structure do these equivalence classes have? For example, one might hope that only very similar graphs receive identical embeddings: this would make this information loss behave like local ``noise,'' but leave large differences largely untouched. 
Unfortunately, this is not the case, as we show in this section; each equivalence class contains vastly different graphs. 


\subsection{Characterizing equivalence classes of 2-block SBMs}

As before, we begin with an analytic characterization for 2-block SBMs. We find that each equivalence class consists of a line in ($p,q,r$) parameter space, ranging from maximally dense communities (on the boundary of $\Pi_d$) to maximally sparse ones (when $q=0$ or $q=1$). This is quite a large range: the overall density of the graph is \emph{not} encoded in the representation! That is, the embedding of a very sparse graph can be identical to that of a very dense graph. 

We explicitly characterize the form of an equivalence class induced by a given embedding $K$:

\begin{restatable}[Identically-embedded 2-block SBMs]{theorem}{linearfamilies} Partition $\Pi_m$ into equivalence classes, where  $\mathcal G \sim \mathcal G'$ iff $K(\mathcal G) = K(\mathcal G')$. Fix any $\mathcal G = (p_0,q_0,r_0)$ and let $\hat K = K(\mathcal G) = (K_1, K_2, K_3)$. The equivalence class containing $\mathcal G$ is 
\begin{equation}
F_{\hat K} = \left\{ (p,q,r) \in \Pi_m, \text{ where }\; \begin{aligned}
   p &= p_0 + \eta \Delta \\
   q &= q_0 + \mathbb S \Delta \\
   r &= r_0 + \tfrac{1}{\eta}\Delta
\end{aligned}, \Delta \in \mathbb R \right\},
\label{eq:linearfam_1}
\end{equation}
where 
\[ \eta = \frac{1-a}{a} \sqrt{\frac{K_3}{K_1}} \quad \text{ and }\quad  \mathbb S = \begin{cases} +1 & q\leq 1/2 \\ -1 & q > 1/2 \end{cases}.\]

Equivalently, we can write the class as
\begin{equation}
\begin{aligned}
    F_{\hat K} = \bigg\{\mathcal G' = \left(\begin{bmatrix}
        p & q \\ q & r
    \end{bmatrix} + \mathbb S \Delta \begin{bmatrix}
        \eta & 1 \\ 1 & 1/\eta
    \end{bmatrix}\right)\\ 
    \text{where } \mathcal G' \in \Pi_m, \Delta \in \mathbb R \bigg\}.
\end{aligned}
\end{equation}
\label{thm:eqclasses}
\end{restatable}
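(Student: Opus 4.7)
The plan is to invert the KKT system (\ref{eq:intermediatecase}) from Theorem~\ref{thm:embedding-limit-regimes}(3): for a fixed embedding $\hat K = (K_1, K_2, K_3)$, describe all $(p,q,r) \in \Pi_m$ that produce it.

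First, I would solve (\ref{eq:intermediatecase}) for $p$, $q$, $r$ while treating $K_1, K_2, K_3$ and the Lagrange multiplier $\mu$ as parameters. Each equation is linear in the corresponding coordinate, giving $p$, $q$, and $r$ as affine functions of $\mu$ with coefficients determined by $\hat K$ and $a$. With $\hat K$ fixed, varying $\mu$ therefore traces a straight line in $(p,q,r)$-space. Since Theorem~\ref{thm:embedding-limit-regimes}(3) establishes that (\ref{eq:intermediatecase}) together with the sign rule $K_2 = \pm\sqrt{K_1 K_3}$ is both necessary and sufficient for a graphon to embed to $\hat K$, every member of the equivalence class of $\hat K$ appears on this line, and conversely every point on this line that remains in $\Pi_m$ lies in the class.

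Second, I would reparametrize the line by a signed scalar $\Delta$ centered at a reference $(p_0, q_0, r_0) \in \Pi_m$ with $K(\mathcal G_0) = \hat K$. The direction vector of the line is read off directly from the coefficients of $\mu$; rescaling $\Delta$ so that the $p$-component moves by $\eta \Delta$ fixes $\eta$ from the ratio of the $p$- and $r$-slopes, yielding $\eta = \tfrac{1-a}{a}\sqrt{K_3/K_1}$. The $q$-displacement is then of the form $\mathbb S \Delta$; substituting $K_2 = \pm\sqrt{K_1 K_3}$ shows that $\mathbb S$ tracks the sign of $-K_2/\sqrt{K_1 K_3}$, which by Theorem~\ref{thm:embedding-limit-regimes}(3) is $+1$ when $q \leq 1/2$ and $-1$ otherwise, matching the statement. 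Collecting the three scalar relations as a symmetric $2\times 2$ matrix identity recovers the second form in (\ref{eq:linearfam_1}).

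The main subtlety I expect is reconciling $\mu \geq 0$ with the unrestricted $\Delta \in \mathbb R$ in the theorem, and verifying that the family really does stay inside $\Pi_m$ throughout the claimed range. Since the base point $(p_0, q_0, r_0)$ is arbitrary, negative $\Delta$ simply corresponds to a smaller (still nonnegative) $\mu$, so the nonnegativity of $\mu$ does not restrict $\Delta$. That the line stays in $\Pi_m$ until it hits either the boundary with $\Pi_d$ (reached at $\mu = 0$) or the unit-cube boundary of $[0,1]^3$ follows from the linearity of the defining inequalities of $\Pi_m$ in $p, q, r$: the intersection is a connected interval, so the equivalence class is exactly the portion of this line lying in $\Pi_m$.
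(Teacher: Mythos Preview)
Your proposal is correct and follows essentially the same route as the paper: fix $\hat K$, treat the KKT system (\ref{eq:intermediatecase}) as affine in $(p,q,r)$ with free parameter $\mu$, read off the direction vector, and reparametrize by $\Delta$ to obtain $\eta$ and $\mathbb S$. The paper phrases this as subtracting two instances of the system (with multipliers $\mu$ and $\mu'$) rather than solving for $(p,q,r)$ directly, but the algebra is identical. One small slip: your final paragraph asserts that the defining inequalities of $\Pi_m$ are linear in $p,q,r$, which is false (they involve $\sigma^{-1}$ and quadratics); fortunately the theorem statement already restricts $F_{\hat K}$ to $\Pi_m$ by fiat, so no connectedness argument is actually needed.
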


Proof in Appendix~\ref{appx:proof2}.

Each equivalence class $F_{\hat K}$ is a line in parameter space with ``slope'' determined by $\eta$. For a sparse graph ($q \leq \frac12$), one can imagine the class as a series of increasingly-dense SBMs, where $p$ increases at rate $\eta$, $r$ increases at rate $1/\eta$, and $q$ increases at rate 1. A visualization of several of these linear equivalence classes is shown in Fig.~\ref{fig:lines-of-eq-classes}. Each class is a line from $q=0$ or $q=1$ to the surface of $\Pi_d$. 

One particular class is shown in Fig.~\ref{fig:one-eqclass}. Note that even the relative densities of the communities are not preserved by the embedding: in $\mathcal G_0$, the larger (blue) community is denser, whereas in $\mathcal G_2$ the smaller (green) community is denser. In fact, we can exactly derive what the densest member of each equivalence class is:

\begin{corollary}
    For any SBM $\mathcal G \in \Pi_m$ with embedding $\hat K=(K_1,\pm \sqrt{K_1 K_3},K_3)$, the densest SBM $\mathcal G^\star$ (in $p,r$) that has the same embedding is 
    \[(p^\star, q^\star, r^\star) \triangleq \argmax_{\mathcal G' : K(\mathcal G') = K(\mathcal G)} p' =  (\sigma(K_1), \sigma(K_2), \sigma(K_3)).\] 
    This SBM $(p^\star, q^\star, r^\star)$ lies on the boundary between $\Pi_m$ and $\Pi_d$. 
    \label{cor:densecanonical}
\end{corollary}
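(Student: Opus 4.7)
The plan is to combine Theorems~\ref{thm:embedding-limit-regimes} and~\ref{thm:eqclasses}: use the linear parameterization of $F_{\hat K}$ from Theorem~\ref{thm:eqclasses} to find the extreme point in the densifying direction, then use Theorem~\ref{thm:embedding-limit-regimes}.1 to verify it coincides with $(\sigma(K_1), \sigma(K_2), \sigma(K_3))$ sitting on the $\Pi_d$ boundary.

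First, I would verify that $\mathcal G^\star = (\sigma(K_1), \sigma(K_2), \sigma(K_3))$ has embedding $\hat K$. Since $\mathcal G \in \Pi_m$, Theorem~\ref{thm:embedding-limit-regimes}.3 yields $K_2 = \pm\sqrt{K_1 K_3}$, hence $K_2^2 = K_1 K_3$. Applying $\sigma^{-1}$ componentwise to $\mathcal G^\star$ gives $(K_1, K_2, K_3)$, which therefore satisfies the $\Pi_d$ boundary equation $\sigma^{-1}(q)^2 = \sigma^{-1}(p)\sigma^{-1}(r)$ with equality; assuming $K_1, K_3 \geq 0$ (so $p^\star, r^\star \geq 1/2$), this places $\mathcal G^\star$ precisely on $\partial\Pi_d \cap \partial\Pi_m$. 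At this boundary, Theorem~\ref{thm:embedding-limit-regimes}.1 applies and immediately gives $K(\mathcal G^\star) = (K_1, K_2, K_3) = \hat K$.

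Second, I would show that $\mathcal G^\star$ is the densest graphon with embedding $\hat K$. By Theorem~\ref{thm:eqclasses}, $F_{\hat K}$ is parameterized as the line $\Delta \mapsto (p_0 + \eta\Delta, q_0 + \mathbb S\Delta, r_0 + \Delta/\eta)$ with $\eta = \tfrac{1-a}{a}\sqrt{K_3/K_1} > 0$. Thus $p(\Delta)$ and $r(\Delta)$ are both strictly increasing in $\Delta$, so their common maximum over $F_{\hat K}$ is attained at the largest $\Delta$ for which the parameterized graphon remains in $\Pi_m$. Since $\sigma(K_i) \in (0,1)$ strictly, the box faces $\{p,q,r\} \cap \{0,1\} \neq \emptyset$ do not bind at $\mathcal G^\star$; the active constraint as $\Delta$ grows is the transition into $\Pi_d$. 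Uniqueness of graphons with embedding $\hat K$ inside $\Pi_d$ (by Theorem~\ref{thm:embedding-limit-regimes}.1, which pins the graphon down to $(\sigma(K_1), \sigma(K_2), \sigma(K_3))$) then forces the extreme point of $F_{\hat K}$ in this direction to be $\mathcal G^\star$.

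The main obstacle is justifying the direction of motion along the line: I need to confirm that increasing $\Delta$ drives $(p,q,r)$ into $\Pi_d$ rather than some other exit of $\Pi_m$. This should follow because $\Pi_s$ requires $p, r \leq 1/2$ (so increasing both $p$ and $r$ moves away from $\Pi_s$), while $\sigma(K_i) \in (0,1)$ strictly rules out box-face contact at $\mathcal G^\star$. The degenerate sub-case $K_1 K_3 = 0$, where $\eta$ is undefined, needs a brief separate treatment — here Theorem~\ref{thm:embedding-limit-regimes}.3 also forces $K_2 = 0$, and the ``line'' collapses, making the claim either trivial or vacuous.
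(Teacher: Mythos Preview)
Your approach is essentially the paper's: verify that $(\sigma(K_1),\sigma(K_2),\sigma(K_3))$ has embedding $\hat K$ and sits on the $\Pi_d$ boundary, then read off maximality from the linear structure of $F_{\hat K}$. One small technical point: the paper confirms $K(\mathcal G^\star)=\hat K$ by plugging $(\sigma(K_1),\sigma(K_2),\sigma(K_3))$ with $\mu=0$ directly into Eq.~\eqref{eq:intermediatecase}, rather than by invoking Theorem~\ref{thm:embedding-limit-regimes}.1, whose hypothesis $\sigma^{-1}(q)^2<\sigma^{-1}(p)\sigma^{-1}(r)$ is strict and so does not formally cover the boundary point; your argument is easily repaired by making the same substitution.
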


It is easy to verify that this choice of $(p^\star, q^\star, r^\star)$ both matches the $K$ values of Theorem~\ref{thm:embedding-limit-regimes}.1 and satisfies Eq.~\ref{eq:intermediatecase}.
Cor.~\ref{cor:densecanonical} also allows us to restate $F_{\hat K}$ without relying on a prior-known $(p_0, q_0, r_0)$, by using $(p^\star, q^\star, r^\star)$ instead:

\begin{equation}
F_{\hat K} = \left\{ (p,q,r) \in \Pi_m, \text{ where }\; \begin{aligned}
   p &= \sigma(K_1) + \eta \Delta \\
   q &= \sigma(\mathbb S \sqrt{K_1K_3}) + \mathbb S \Delta \\
   r &= \sigma(K_3) + \tfrac{1}{\eta}\Delta\\
    & \text{for } \Delta \in \mathbb R
\end{aligned}\right\}
\label{eq:linearfam_1}
\end{equation}

\subsection{Numerically computing equivalence classes for $k$-community SBMs}

Once again, we turn to numerical methods to study higher-order SBMs. 
Using a numerical Jacobian on the minimizer described in \ref{sec:numerical-1} plus a corrector, we are able to step along the equivalence class given an initial (graph, embedding) pair. We find, again, that the equivalence classes are linear, with the slope of each line determined by a vector of relative densification rates for each block, corresponding to $\eta$ in the 2-block case. As before, each class ranges from maximally-sparse to maximally-dense: one endpoint is a SBM with at least one zero component, and the other lies on the boundary of the dense region (with block probabilities $P = \sigma(K)$, applied elementwise). Additionally, we find that the ``densification rate'' has low-rank-like behavior: for details, see Appendix~\ref{appx:numericaleqclasses}.

Because these classes live in a space whose dimension grows quadratically with the number of communities, they are difficult to visualize. 
We give two illustrative examples of equivalence classes for 3-block SBMs in Appendix~\ref{appx:numericaleqclasses_examples}.

\section{Fairness and downstream tasks}
\subsection{Representation-level disparities}
\begin{figure*}
\begin{subfigure}[t]{0.48\linewidth} 
    \vspace{0pt}
    \centering
    \includegraphics[width=\linewidth]{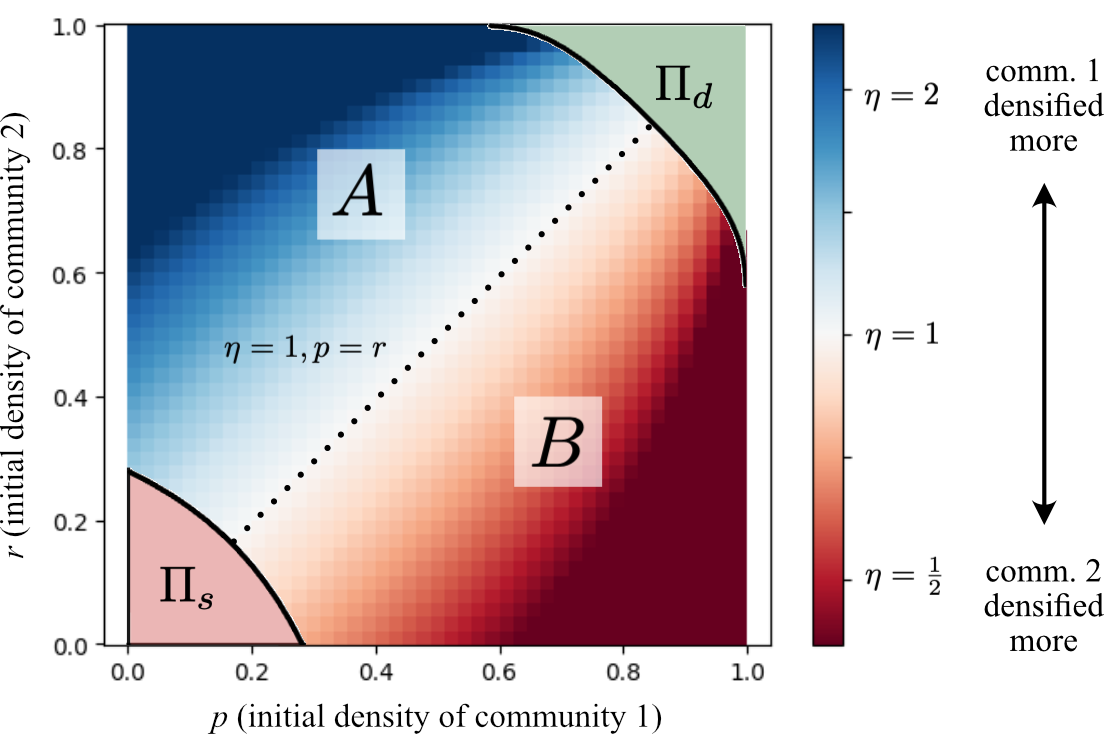}
    \caption{
    When the two communities are equal-sized $(a=0.5)$, link prediction strengthens the sparser community. In region $A$, community 1 is sparser ($p<r$) and is densified (because $\eta > 1$); the same is true for community 2 in region $B$.}
\end{subfigure}
\hfill 
\begin{subfigure}[t]{0.48\linewidth} 
    \vspace{0pt}
    \centering
    \includegraphics[width=\linewidth]{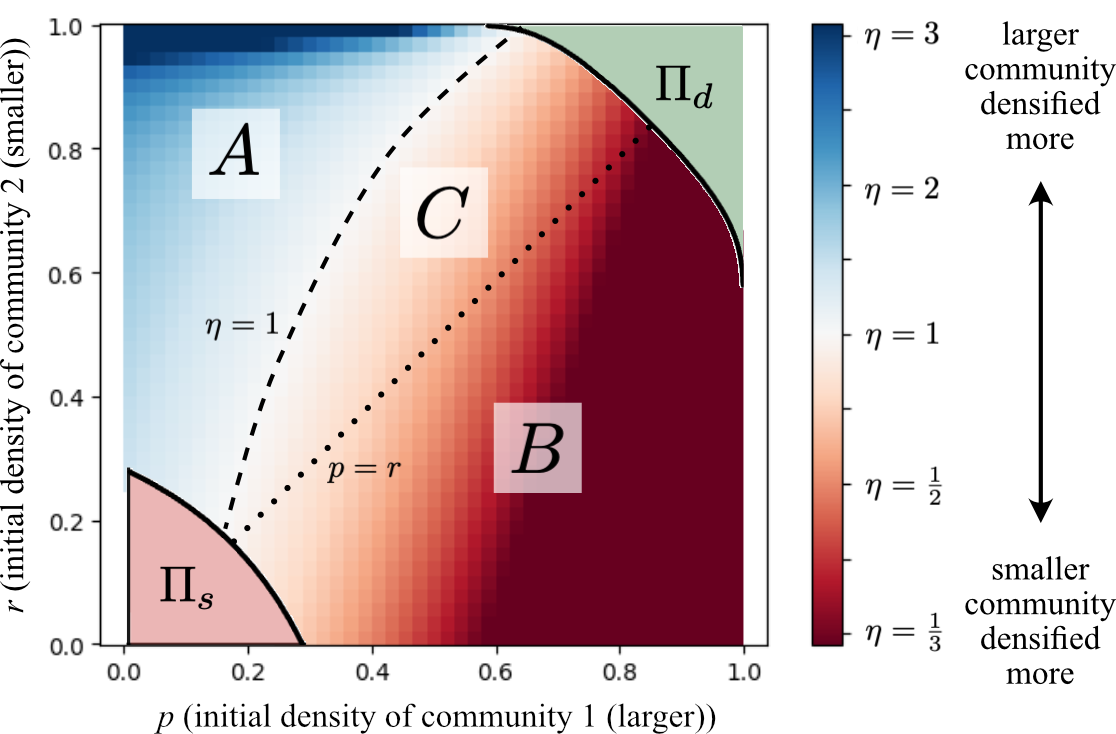}
    \caption{When the two communities are different sizes $(a=0.75)$, link prediction is also pushed towards strengthening smaller communities. In region $C$, this force ``wins out'' over the sparsity factor: community 2 (which is $\frac14$ of the graph) is strengthened despite being denser.} 
\end{subfigure}
\caption{Unbalanced densification in link prediction is affected by the relative sparsity and size of communities in the graph.}
\label{fig:etas-unbalanced}
\end{figure*}

One takeaway from our results about equivalence classes is that, when we embed a sparse graph, we get back an embedding that ``looks like'' a much denser graph\footnote{And, similarly, when we embed a dense graph, we get back an embedding that looks like a much sparser one. But most real-world networks are very sparse, so we'll use this language for now.}. But not every community is densified the same amount! The larger community is densified at rate $\eta = \frac{1-a}{a}\sqrt{\frac{K_3}{K_1}}$, and the smaller at rate $\frac1\eta = \frac{a}{1-a} \sqrt{\frac{K_1}{K_3}}$. 

This directly explains the ``squishing'' behavior we see in our motivating examples, in which smaller and sparser communities were compressed into smaller (and denser) sections of embedding space. $\eta$ depends on two ratios:
\begin{itemize}
    \item $\frac{1-a}{a}$, the ratio of the group sizes (larger difference in group sizes leads to smaller $\eta$, meaning the small group is densified); and 
    \item $\sqrt{\frac{K_3}{K_1}}$, the ratio of the within-group inner products in the learned embedding. Recall that the loss drives $K_3$ towards $\sigma(r)$, and $K_1$ towards $\sigma(p)$. Since $\sigma$ is monotonic, this term increases in $p/r$, the ratio between the groups' densities\footnote{except near the boundary of $\Pi_s$, the sparse, heterophilic regime}. Once again, a larger difference in density translates to an increased densification of the sparser group. 
\end{itemize}

In real networks, these two pressures are often aligned: smaller communities also tend to be sparser. However, they can also be in tension; in these cases, which one ``wins out'' depends on the relative strength of each. We show two slices of parameter space, with $a=0.5$ (equal-sized communities) and $a=0.75$ (one much larger community), in Fig.~\ref{fig:etas-unbalanced}. 

How does this disparate representation affect downstream tasks? 
In general, the information loss described in Theorem~\ref{thm:embedding-limit-regimes} gives reason to believe that using the embeddings alone for downstream tasks is often insufficient, particularly in the non-invertible regime $\Pi_m$. Because a particular embedding $K$ can come from a whole class of graphs, many downstream tasks may essentially boil down to ``guessing'' which graph in the class the embedding is from. 

\subsection{Link prediction}
\label{sec:link-prediction}

One common task is link prediction: given the embeddings $\{\omega_0, \omega_1, \cdots\}$, output a probability for each potential edge $(i,j)$. \textbf{A baseline link-prediction algorithm predicts the densest graph in the equivalence class}. A simple algorithm is to compute the pairwise inner products between the embeddings, and then pass these scores through a sigmoid such as $\sigma$, the standard logistic\footnote{This method is used in variational graph auto-encoders \cite{kipf2016variational} and is representative of a whole family of inner-product-based link prediction algorithms (e.g., \cite{saxena2022nodesim, malitesta2024dot}).}. The predicted probability of edge $(i,j)$ in the 2-block case is
\begin{align*}
    \sigma(\langle \omega_i, \omega_j\rangle) &= \sigma(K[i,j]) \\ 
    &= \begin{cases} \sigma(K_1) = p^\star & \text{ if }i,j \text{ in group 1}\\\sigma(K_2) = q^\star & \text{ if }i,j \text{ in different groups}\\\sigma(K_3) = r^\star & \text{ if }i,j \text{ in group 2}\end{cases}
\end{align*}

That is, using this algorithm implicitly recovers the maximally-dense SBM $\mathcal G^\star = (p^\star, q^\star, r^\star)$ in the equivalence class (c.f. Cor.~\ref{cor:densecanonical}). In particular, this will \textbf{overpredict edges for smaller and sparser communities} due to the unbalanced densification via $\eta$. Erroneous overpredicted edges increase the error rate of these minority communities, leading to the aforementioned violations of statistical parity and group fairness. 

In Appendix~\ref{appx:fb100}, we embed graphs from the Facebook100 university dataset, to demonstrate that these effects are not just an artifact of stochastic block models. In particular, we find that the patterns of sparser, smaller communities getting over-densified is robustly observed in real graphs, and that the empirical rate of densification is strongly correlated with $\eta$, the amount predicted by our theory.



\subsection{Other downstream tasks}
The non-invertibility and relative densification we observe have implications for other downstream tasks. We briefly discuss community detection in Appendix~\ref{appx:tasks}.


\subsection{Fairness}
Substantial attention has been paid to whether network algorithms (such as link prediction) exacerbate or mitigate underlying biases in network structure (see \cite{saxena2022nodesim} for an overview). 

It is common to consider the density of a community as a measure of the ``strength'' or ``power'' of that community within a network \cite{freeman1978centrality}; so, depending on the context, it can be considered preferable to preferentially strengthen sparser and/or smaller communities \cite{stoica2024fairness}, to equalize link probabilities relative to group membership  \cite{liu2024promoting, li2022fairlp}, or to reinforce between-community edges \cite{conover2011political}\footnote{This is often proposed in the context of polarization, although the empirical support for this is mixed \cite{dandekar2013biased,bakshy2015exposure}.}. In our case, the disproportionate densification of smaller and sparser communities can be viewed as \emph{reducing} (and in some cases, even reversing) degree inequality. Depending on one's goals, this may be seen as beneficial (mitigating structural biases) or harmful (introducing group-level error). 

More broadly, our non-invertibility results illustrate a general point about downstream tasks: by making algorithmic choices (e.g., choice of parameters, normalizations, assumptions, etc.), one is \textbf{implicitly choosing a ``canonical'' member among equivalently-embedded graphs}. We emphasize that this choice should be made thoughtfully and not in a data-agnostic way. 


\section{Discussion} 
The theoretical conditions we derive in this work allow us to draw principled connections between graph structure, information loss, and learned embeddings.  We are excited that patterns predicted by our theoretical results contribute to explaining disparities in existing ``baseline'' algorithms. 



While the stochastic block model we study captures many aspects of real-world graphs,
there are many other properties 
that they do not capture. We expect that that our results can be generalized in multiple directions, including to more complex graphons and to more general generative models (e.g., graphexes \cite{veitch2015class}). 
Another open question is whether these results transfer to ``word-context'' style embeddings as described by Chanpuriya et al \yrcite{chanpuriya2020node}, which have stronger representation power. 
Additionally, Davison \& Austern \yrcite{davison2023asymptotics} characterize limiting distributions of loss-minimizing embeddings that use more complex sampling schemes (such as the random walks used by DeepWalk and node2vec); generalizing our results to these schemes is a direction for future work. 




\section*{Impact Statement}

Our work addresses fairness in social networks by analyzing representation disparities across majority and minority groups. By furthering theoretical understanding of the sources of these disparities, we offer insights that can inform principled algorithmic interventions, helping future systems achieve more equitable outcomes.


\bibliographystyle{icml2026}
\bibliography{bibliography}

\newpage 
\onecolumn 
\appendix 

\section{Proofs for 2-Block SBMs}
\subsection{Theorem~\ref{thm:embedding-limit-regimes}: Information Loss Regimes}
\label{appx:proof1}


\embeddinglimitregimes*

\begin{proof}
    Let $\rho$ be the sparsity parameter of the uniform vertex sampling, and our two-community SBM have parameters $(a,(p,q,r))$. Let $K$ be the matrix of ERM-minimizing inner products (that is, $K_{i,j} = \langle \hat \omega_i, \hat \omega_j\rangle$). By Davison \& Austern \yrcite{davison2023asymptotics}, $K$ converges to a positive definite block matrix 
    \[ K = \begin{bmatrix}
        K_1 & K_2 \\ K_2 & K_3
    \end{bmatrix}\]
    where the block sizes are $an, (1-a)n$, corresponding to the sizes of the SBM communities. We wish to find the values of $K_1, K_2, K_3$ that minimize the empirical risk (Eq.~\ref{eq:empirical-risk}): 
    \begin{align*}
        & \mathcal{R}_n(\omega_1, \cdots, \omega_n) \\
        &= \sum_{i,j \in [n], i\neq j} \Pr((i,j) \in S(G_n)\mid G_n) \ell(\langle\omega_i, \omega_j\rangle, a_{ij}) \\ 
        &= \frac{\rho}{n^2} \sum_{i \neq j} \ell(\langle \omega_i, \omega_j \rangle, a_{ij} ) \qquad \text{(uniform vertex sampling)}
    \end{align*} 
    
    Denoting the two SBM communities by $A,B$, we split the sum into cases: 
    \begin{align*}
        &= \frac{\rho}{n^2}\bigg( \sum_{i,j\in A} \ell(\langle w_i, w_j\rangle, p) + \sum_{i,j \in B} \ell(\langle w_i, w_j\rangle , r) + 2\sum_{i \in A, j \in B}\ell(\langle w_i, w_j\rangle, q)  \bigg)\\
        &= \rho\left(a^2 \ell(K_1, p) + (1-a)^2 \ell(K_3, r)+ 2a(1-a)\ell(K_3, q) \right)
    \end{align*}

    Letting $\sigma(y) = \frac{e^y}{1+e^y}$, the cross-entropy loss is
    \begin{align*} \ell(y,x) &= -x\log(\sigma(y)) - (1-x) \log (1-\sigma(y)) \\
    &= \log(1+e^y) - xy 
    \end{align*}
    So, the problem of optimizing the risk $\mathcal R_n$ is equivalent to minimizing the function 
    \begin{align*}
    \mathcal R_n(K) = & \rho\bigg( a^2(\log(1+e^p)-K_1p)  + (1-a)^2 (\log(1+e^r)-K_3r) + 2a(1-a) (\log(1+e^q) - K_2q) \bigg)    
    \end{align*}
    
    over positive definite matrices $K = [K_1, K_2; K_2, K_3]$. The positive definite constraint forces $K_1, K_3 \geq 0$ and $K_2^2 \leq K_1K_3$. Letting $\mu_1, \mu_2, \mu_3 \geq 0$ be dual variables, the Lagrangian is $\mathcal R_n(K) - \mu_1 K_1 - \mu_2 K_3 - \mu_3(K_1K_3-K_2^2)$. The KKT conditions for this problem state that any minima that satisfies the LICQ conditions must satisfy 
    \begin{align*}
           a^2(\sigma(K_1)-p) - \mu_1 - K_3\mu_3 &= 0, & (1)\\ 
           (1-a)^2(\sigma(K_3)-r) - \mu_2 - K_1\mu_3 &= 0, & (2) \\ 
           2a(1-a)(\sigma(K_2)-q) + 2\mu_3K_2 &= 0, & (3)\\
           \mu_1K_1 = 0, \qquad \mu_2K_3 = 0, \qquad \mu_3(K_1K_3-K_2^2) &= 0.
    \end{align*}

    We now work case-by-case on which constraints are tight (i.e., whether $\mu_1, \mu_2, \mu_3$ are zero). Each case is only feasible in a particular region of $(p,q,r)$ parameter space; we show that the cases are feasible in regions of parameter space that are \emph{disjoint} (except potentially at the boundaries). This allows us to conclude that the values of $K$ that are taken in each case are the optimal ones.

    \vspace{10pt}

    \textbf{Case 1: No constraints are tight (dense case)}. That is, $K_1 > 0, K_3 > 0, K_2^2 < K_1K_3$. Then $\mu_1 = \mu_2 = \mu_3 = 0$. (1) reduces to $a^2(\sigma(K_1)-p) = 0 \implies K_1 = \sigma^{-1}(p)$. Similarly, $K_2 = \sigma^{-1}(q)$ and $K_3 = \sigma^{-1}(r)$. Since $K_1, K_2, K_3$ are positive, this exists only when $p >\frac12, r>\frac12$. The $K_2^2 < K_1K_3$ constraint implies $\sinv(q)^2 < \sinv(p)\sinv(r)$. 

    So, this case is feasible so long as $p>\frac12, r>\frac12, \text{ and }\sinv(q)^2 < \sinv(p)\sinv(r)$ and results in $K_1 = \sinv(p), K_2 = \sinv(q), K_3 = \sinv(r)$.

    \vspace{10pt}

    \textbf{Case 2: $K_2^2 \leq K_1K_3$ is tight, and the other constraints are loose (middle case).} So, $K_1>0, K_3>0$; these imply $\mu_1 =\mu_2 = 0$ and $\mu_3 \geq 0$. Here, we split further into two cases: 
    \vspace{10pt}

    \textbf{Case 2(a): $K_2 = -\sqrt{K_1K_3}, K_1>0, K_3>0$ (sparse cross-edges)}. The KKT conditions become: 
    \begin{align*}
           a^2(\sigma(K_1)-p) - K_3\mu_3 &= 0, & (4)\\ 
           (1-a)^2(\sigma(K_3)-r) - K_1\mu_3 &= 0, & (5) \\ 
           2a(1-a)(\sigma(-\sqrt{K_1K_3})-q) - 2\mu_3\sqrt{K_1K_3} &= 0, & (6)
    \end{align*}

    The first equation implies $a^2(\sigma(K_1)-p) \geq 0 \implies \sigma(K_1) \geq p$. Combined with the constraint that $K_1 \geq 0$, then $K_1 \geq \max(0, \sinv(p))$. Similarly, $K_3 \geq \max(0, \sinv(r))$; so, $\sqrt{K_1K_3} \geq \sqrt{\max(0, \sinv(p)) \max(0, \sinv(r))}$. 

    Rearranging the third equation, we know 
    \[ \sigma(-\sqrt{K_1K_3}) - q = \frac{\mu_3}{a(1-a)} \sqrt{K_1K_3} \geq 0\]
    So $\sigma(-\sqrt{K_1K_3}) \geq q$, which implies $\sqrt{K_1K_3} \leq -\sinv(q)$. 

    Since $\sqrt{K_1K_3} > 0$, this is only feasible when $-\sinv(q) >0$, so $q<\frac12$. 

    Finally, we combine the bounds on $\sqrt{K_1K_3}$: 
    \begin{align*}
        &\sqrt{\max(0, \sinv(p)) \max(0, \sinv(r))} \leq \sqrt{K_1K_3} \leq -\sinv(q)\\ 
        \implies & \sinv(q)^2 \geq \max(0, \sinv(p)) \max(0, \sinv(r)) \\ 
        \equiv \quad & \sinv(q)^2 \geq \sinv(p) \sinv(r) \;\text{ or } \;p \leq \frac12 \;\text{ or } \; r \leq \frac12
    \end{align*}

    So this case ($K_2=-\sqrt{K_1K_3}, K_1>0, K_3>0$) is feasible so long as $q <\frac12$ and ($\sinv(q)^2 \geq \sinv(p) \sinv(r)$ or $p \leq \frac12$ or $r \leq \frac12$). 

    \vspace{10pt}
    \textbf{Case 2(b): $K_2 = +\sqrt{K_1K_3}, K_1>0, K_3>0$ (dense cross-edges)}. This case is very similar to case 2(a): $\sqrt{K_1K_3} \geq \sqrt{\max(0,\sinv(p))\max(0, \sinv(r))}$. However, the sign changes means that $\sqrt{K_1K_3} \leq \sinv(q)$, which implies $q < \frac12$.

    Squaring both sides as before, we obtain the same condition: 
    \[\sinv(q)^2 \geq \sinv(p)\sinv(r) \;\text{ or } \;p \leq \frac12 \;\text{ or } \; r \leq \frac12\]

    So this case ($K_2=+\sqrt{K_1K_3}, K_1>0, K_3>0$) is feasible so long as $q >\frac12$ and ($\sinv(q)^2 \geq \sinv(p) \sinv(r)$ or $p \leq \frac12$ or $r \leq \frac12$). 

    \vspace{10pt}
    \textbf{Case 3: $K_1=0$ or $K_3 =0$ or both (edge cases)}. In all of these cases, the PSD constraint $K_2^2 \leq K_1K_3$ forces $K_2 = 0$. Then, the third KKT condition ($a(1-a)(\sigma(0)-q) - 2\mu_3(0) =0$) forces $q=\frac12$. 
    \begin{itemize}
        \item When $K_1 = 0, K_3 \neq 0$, then $\mu_2 = 0$. The first KKT condition $\mu_1 \geq 0$ forces $a^2(\sigma(0)-p)\geq 0$, that is, $\frac12 \geq p$. The second KKT condition becomes $(1-a)^2(\sigma(K_3)-r) - 0 - 0\mu_3$, which is satisfied only when $K_3 = \sinv(r)$. So, the feasible region is $q=\frac12, p \leq \frac12, r > \frac12$.
        \item Similarly, when $K_3 = 0, K_1\neq 0$, the second KKT condition forces $\frac12 \geq r$, and the first KKT condition forces $K_1 = \sinv(p)$. So, the feasible region is $q=\frac12, r \leq \frac12, p > \frac12$. 
        \item When both $K_1 = K_3 = 0$, then the feasible region is $q=\frac12, p \leq \frac12$ and $r\leq \frac12$. 
    \end{itemize}

    \vspace{10pt} 
    \textbf{Non-KKT cases}. 
    Finally, we check points that \emph{don't} satisfy the linear independence constaint qualification (LICQ) conditions, i.e., potential solutions that would not be found by KKT. Our constraints are $K_1 \geq 0, K_3 \geq 0, K_1K_3-K_2^2 \geq 0$. Their gradients are $(1,0,0), (0,0,1)$, and $(K_3, -2K_2, K_1)$ respectively. These gradients are linearly independent at all points except for $(0,0,0)$ (the origin) and $(0, 0, K_3)$ and $(K_1,0,0)$. We check the optimality of these points first: 
    \begin{itemize}
        \item $(0,0,0)$ is globally optimal iff the directional derivative in every feasible direction is nonnegative. That is, for all feasible $(u,v,w)$ in the feasible cone (that is, $u \geq 0, w \geq 0, v^2 \leq uw)$, \[\nabla \mathcal R_n(0,0,0) \cdot (u,v,w) \geq 0.\]
        $\nabla \mathcal R_n(0,0,0)$ is $(a^2(1/2-p), 2a(1-a)(1/2-q), (1-a)^2)1/2-r)$. Letting $u = x^2, w = z^2, v = \rho xz$ for $x \geq 0, z\geq 0, |\rho|\leq 1$ (which ensures the constraints are satisfied), this is 
        \begin{align*}
            &\nabla \mathcal R_n(0,0,0) \cdot (u,v,w) \\ &= \left(\frac12-p\right)(ax)^2 + 2\left(\frac12-q\right)\rho (ax)(1-a)z + \left(\frac12-r\right) ((1-a)z)^2 \geq 0
        \end{align*} 
        This is minimized by having $\rho = \text{sign}(1/2-q)$. Rearranging and letting $t=\frac{ax}{(1-a)z}$, this is
        \[ \frac{1}{(1-a)^2z^2} \left(\frac12-p\right)t^2 + 2\left|\frac12-q\right| t + \left(\frac12-r\right) \geq 0\]
        which holds whenever $\frac12-p \geq 0, \frac12-r \geq 0$, and $(\frac12-q)^2 \leq (\frac12-p) (\frac12-r)$. 

        That is, when $p \leq \frac12, r\leq \frac12$, and $(\frac12-q)^2 \leq (\frac12-p) (\frac12-r)$, the risk is minimized by letting $K_1 = K_2 = K_3 = 0$. 

        \item $(K_1,0,0)$ is feasible along the curve $\gamma(t) = (K_1, t, t^2/K_1)$ for $t \in \mathbb R$. The directional derivative along $\gamma$ at $t=0$ is \[ \frac{d}{dt} \mathbb R_n(\gamma(t)) \mid_{t=0} = 2a(1-a)(1/2 - q).\]
        This is only minimizing at $q=\frac12$. This becomes identical to case 3 in the KKT cases. 
        \item $(0,0,K_3)$ is similarly identical to case 3 of the KKT cases. 
    \end{itemize}

    These are the only cases where a non-KKT case can be optimal. Since the cases are feasible in disjoint regions of parameter space, the theorem is proved. 
\end{proof}

\subsection{Theorem~\ref{thm:eqclasses}: Equivalence Classes in $\Pi_m$}
\label{appx:proof2}

\linearfamilies*

\begin{proof}

We recall the KKT-derived conditions from Theorem~\ref{thm:embedding-limit-regimes} (case 3): 

\begin{equation}
    \begin{split}
        a^2(\sigma(K_1)-p) - \mu K_3 &= 0  \\ 
        (1-a)^2(\sigma(K_3)-r) - \mu K_1 &= 0 \qquad \qquad  \qquad  \mu \geq 0\\ 
        a(1-a)(\sigma(K_2) - q) + \mu K_2 &= 0 
    \end{split}
    \tag{2}
\end{equation}

Fix $K_1, K_3$. In this case, 
\[K_2 = \begin{cases}
    -\sqrt{K_1K_3} & q \leq \frac12 \\ 
    +\sqrt{K_1K_3} & q > \frac12
\end{cases} \quad = \quad \mathbb S \sqrt{K_1K_3} \qquad\] 
where \[\mathbb S = \begin{cases} 1 & q\leq \frac12 \\ -1 & q > \frac12 \end{cases}\]

Suppose both $(p_0, q_0, r_0)$ and $(p_0+x, q_0+y, r_0+z)$ have the same embedding $\hat K = (K_1,\mathbb S K_2, K_3)$: that is, they result in solutions to  (\ref{eq:intermediatecase}) of $(K_1,K_3,\mu)$ and ($K_1,K_3,\mu')$ respectively. Then 
\begin{align*}
    &a^2(\sigma(K_1) - (p_0 + x)) - \mu' K_3 \\ 
    = \;& a^2 (\sigma(K_1) - p_0) - \mu K_3 = 0 \\ 
    \implies &-a^2x - \mu' K_3 = \mu K_3 \\ 
    \implies &x = (\mu-\mu') \frac{K_3}{a^2}
\end{align*}
Similarly, 
\[y = (\mu-\mu') \frac{\mathbb S \sqrt{K_1K_3}}{a(1-a)} \qquad \text{and} \qquad z = (\mu-\mu') \frac{K_1}{(1-a)^2} \]

So, 
\[x = y \cdot \mathbb S \frac{1-a}{a} \sqrt{\frac{K_3}{K_1}} \qquad \text{and} \qquad z = y \cdot \mathbb S \frac{a}{1-a} \sqrt{\frac{K_1}{K_3}}\]

Since $\mu'$ is free, the class is simply $\{(p_0+x, q_0+y, z_0+z) : \mu' \in \mathbb R\} \cap \Pi_m$. Letting $\eta = \frac{1-a}{a} \sqrt\frac{K_3}{K_1}$ and reparametrizing $\mathbb S y (\mu-\mu') =\Delta$, the class becomes 

\begin{equation}
F_{\hat K} = \left\{ (p,q,r) \in \Pi_m, \text{ where }\; \begin{aligned}
   p &= p_0 + \eta \Delta \\
   q &= q_0 + \mathbb S \Delta \\
   r &= r_0 + \tfrac{1}{\eta}\Delta
\end{aligned}, \Delta \in \mathbb R \right\},
\end{equation}    
\end{proof}

\begin{observation}
    We can also express 
    \[\eta = \sqrt{\frac{\sigma(K_1)-p}{\sigma(K_3)-r}}.\]This can be useful to avoid numerical issues with $\sqrt{\frac{K_1}{K_3}}$ when $K_1$ or $K_3$ are near 0, and also gives some intuition for an alternative interpretation of $\eta$: as a ratio of the amount of ``forced error'' between $\sigma(K_1)$ and $p$ and between $\sigma(K_3)$ and $r$. 
\end{observation}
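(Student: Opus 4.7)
The plan is to derive the alternative expression for $\eta$ directly from the KKT stationarity equations that appear in Theorem~\ref{thm:embedding-limit-regimes}, case 3. The key observation is that the first two equations of (\ref{eq:intermediatecase}) both express a ``residual'' of the form $\sigma(K_i) - (\text{parameter})$ in terms of the Lagrange multiplier $\mu$. Since $\mu$ is a common scalar, it cancels in the ratio of these residuals, producing an expression involving only $a$, $K_1$, and $K_3$ that can be identified with $\eta$.

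Concretely, I would start by rewriting the first two lines of (\ref{eq:intermediatecase}) as
\[
\sigma(K_1)-p \;=\; \frac{\mu\,K_3}{a^2}, \qquad
\sigma(K_3)-r \;=\; \frac{\mu\,K_1}{(1-a)^2}.
\]
Assuming we are in the interior of $\Pi_m$, the proof of Theorem~\ref{thm:embedding-limit-regimes} (Case 2) already establishes that $K_1, K_3 > 0$ and that $\sigma(K_1) \geq p$ and $\sigma(K_3) \geq r$, so both residuals are nonnegative and $\mu > 0$. Dividing the two equations yields
\[
\frac{\sigma(K_1)-p}{\sigma(K_3)-r} \;=\; \frac{(1-a)^2}{a^2}\cdot\frac{K_3}{K_1},
\]
and taking the nonnegative square root gives
\[
\sqrt{\frac{\sigma(K_1)-p}{\sigma(K_3)-r}} \;=\; \frac{1-a}{a}\sqrt{\frac{K_3}{K_1}} \;=\; \eta,
\]
which is the claimed identity.

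The only subtlety worth flagging concerns boundary cases. If $\mu = 0$, the residuals vanish and the ratio is $0/0$, but this case corresponds to the boundary between $\Pi_m$ and $\Pi_d$ (where $K_1 = \sinv(p)$, $K_3 = \sinv(r)$). Similarly, $K_1 = 0$ or $K_3 = 0$ pushes us onto the boundary with $\Pi_s$. Since the observation is stated as an alternative formula \emph{within} $\Pi_m$, and the motivation given is numerical stability when $K_1$ or $K_3$ approaches $0$, no additional work is needed beyond noting that the identity extends by continuity: the right-hand side $\sqrt{(\sigma(K_1)-p)/(\sigma(K_3)-r)}$ remains well-defined whenever the residuals are strictly positive, which is precisely the regime where the left-hand side is numerically delicate. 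This is a short derivation with no substantive obstacle; the main ``content'' is simply recognizing that the two stationarity conditions encode $\eta$ as a ratio of residuals.
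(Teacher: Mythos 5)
Your proposal is correct and follows essentially the same route as the paper: rewrite the first two stationarity conditions of (\ref{eq:intermediatecase}) so that each residual $\sigma(K_i)-(\cdot)$ is proportional to $\mu$ times the opposite diagonal entry, divide to cancel $\mu$, and take the square root to recover $\eta = \frac{1-a}{a}\sqrt{K_3/K_1}$ (note you correctly use the $\mu K_1$ form of the second condition, matching the appendix restatement rather than the typo in the main-text display of the theorem). Your added remarks on the $\mu=0$ and $K_i=0$ boundary cases are a small bonus the paper omits, but nothing essential differs.
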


\begin{proof}
    Rewriting the first two KKT conditions, 
    \[\mu K_3 = a^2(\sigma(K_1)-p) \qquad \mu K_1 = (1-a)^2 (\sigma(K_3)-r)\]
    Dividing these two equations: 
    \[ \frac{\mu K_3}{\mu K_1} = \frac{a^2(\sigma(K_1)-p)}{(1-a)^2 (\sigma(K_3)-r)}\]
    So,
    \begin{align*}
    \eta &= \frac{1-a}{a} \sqrt\frac{K_1}{K_3} = \frac{1-a}{a} \sqrt{\frac{a^2(\sigma(K_1)-p)}{(1-a)^2 (\sigma(K_3)-r)}} =  \sqrt{\frac{\sigma(K_1)-p}{\sigma(K_3)-r}}
    \end{align*}
\end{proof}
\newpage 
\section{Numerical Results for $k$-Block SBMs}

\subsection{Regimes of Information Loss}
\label{appx:numerical_infoloss}
\begin{figure}
    \centering
    \includegraphics[width=0.3\textwidth]{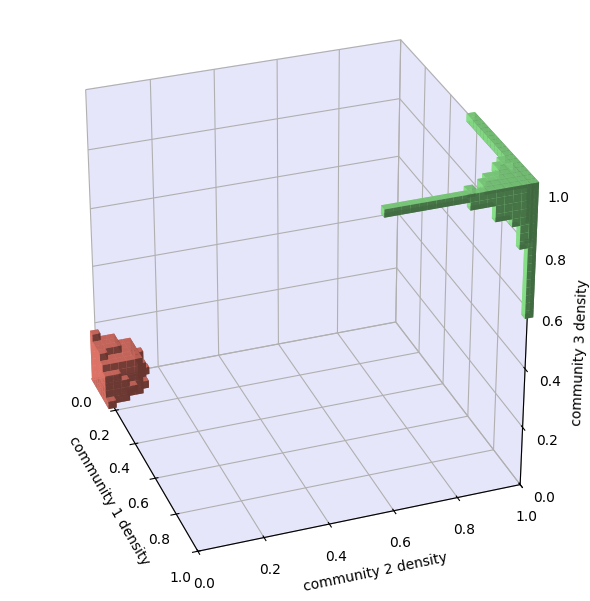}
    \includegraphics[width=0.3\textwidth]{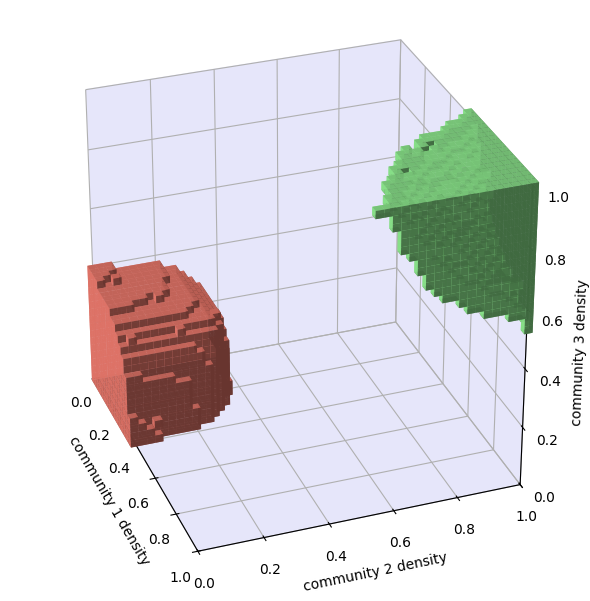}
    \includegraphics[width=0.3\textwidth]{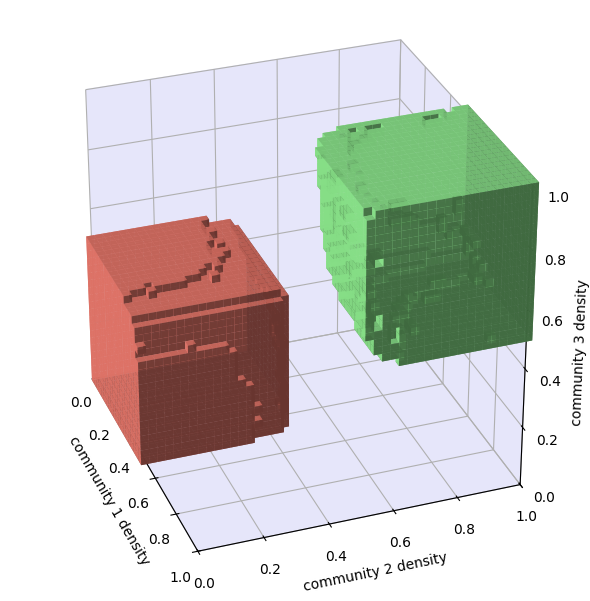}

    \caption{Several slices of 6-dimensional parameter space for 3-block SBMs. The axis coordinates denote the within-group edge densities for each group, and the between-group densities are fixed at 0.1, 0.3, and 0.45 respectively.}
    \label{fig:3dslices}
\end{figure}

\begin{figure}
    \centering
    \includegraphics[width=0.3\linewidth]{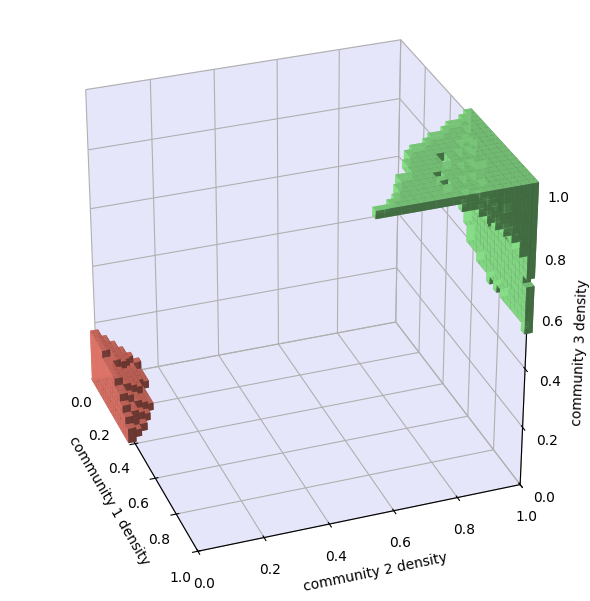}
    \caption{A slice of parameter space for 3-block SBMs where the between-group densities are different. For axis coordinates $p,q,r$, this plot shows regions of information loss for the class of SBMs $P = \begin{bmatrix}
        p & 0.3 & 0.45 \\ 0.3 & q & 0.1 \\ 0.45 & 0.1 & r
    \end{bmatrix}$. Notice that the green (full information) region has boundary faces that closely resemble the corresponding boundary faces in Fig.~\ref{fig:3dslices}, while the red (information loss) region is less sensitive.}
    \label{fig:3dslices_uneven}
\end{figure}

We show a few 3-dimensional ``slices'' of the 6-dimensional parameter space for 3-block SBMs in Fig.~\ref{fig:3dslices}. Notice that they act in an analogous way to the 2-dimensional slices of the $2\times 2$ case, where the dense region ``inverts'' from concave to convex as cross-edges increase.

Note that these slices don't show the whole picture, since they don't capture the regimes where the cross (between-community) blocks have different densities. An example is shown in Fig.~\ref{fig:3dslices_uneven}. In these cases, the (dense) full information regime resembles different (dimension-wise) versions of the same class of shapes, reflecting the PSD condition described in Observation~\ref{obs:PSD_condition}. In contrast, the information loss region is less sensitive to the maximum cross-group density. 

\subsection{Equivalence Classes for $k$-Block SBMs}
\label{appx:numericaleqclasses}

In general, the equivalence classes induced by embeddings for $k$-block SBMs follow the same structure as the 2-block case. For an initial $k$-block SBM $P_0$ with embedding $\hat K$, its equivalence class takes the form
\[ F_{\hat K} = \left\{ P_0 + \Delta \mathrm N\right\}\]
where $\mathrm N$ is a $k\times k$ symmetric matrix (generalizing $\begin{bmatrix}
    \eta & 1 \\ 1 & \frac{1}{\eta}
\end{bmatrix}$ from the 2-block case) and $\Delta$ is a real number, subject to the constraints that $P_0 + \Delta \mathrm N$ is non-negative and 
\begin{equation}
    \Delta \leq \inf_{\delta} (P_0 + \delta \mathrm N \text{ is positive semi-definite}).
    \label{eq:PSD_highorder}
\end{equation}

We observe the following in $\mathrm N$: 

\begin{empobservation}[Low-rank-like densification rate]
    The entries of $\mathrm N$ satisfy 
    \[\mathrm N[i,i] \cdot \mathrm N[j,j] \approx \mathrm N[i,j]^2 \text{ for all pairs of blocks } i \neq j.\]
    \label{obs:lowrank_highorder}
\end{empobservation}

This suggests that the ``slope'' $\mathrm N$ is only has $k-1$ degrees of freedom: namely, the distinct densification rates of each community up to scaling. This low-rank-like behavior\footnote{$\mathrm N$ is not, in fact, low rank. Squaring the diagonal entries would make it low-rank: $\mathrm N' = v^\top v$, where $v$ is the vector of diagonal entries.} is intriguing and we leave it open as a very promising direction for exact characterizations of high-order SBM equivalence classes. 

\subsubsection{Illustrative examples}
\label{appx:numericaleqclasses_examples}
We give two illustrative examples of sets of SBMs that embed to identical representations. 

For three equal-sized blocks, the following class of SBMs all receive the same embedding: 
\[\left\{\begin{bmatrix}
    0.2 & 0 & 0 \\ 0 & 0.3 & 0 \\ 0 & 0 & 0.4
\end{bmatrix} + \Delta \begin{bmatrix}
    1.18 & 1.11 & 1 \\ 1.11 & 1.03 & 0.93 \\ 1 & 0.93 & 0.84
\end{bmatrix} : \Delta \in [0, 0.386] \right\}\]

Note that, as in the 2-block case, the sparser communities receive a larger densifying rate (1.18 for the sparsest community compared to 0.84 for the densest). In this case, $\Delta=0.386$ is the extremal value that satisfies Condition~(\ref{eq:PSD_highorder}) above.

For three different-sized blocks of size $(0.2,0.3,0.5)$, the following class of SBMs all receive the same embedding: 

\[\left\{\begin{bmatrix}
    0.2 & 0 & 0 \\ 0 & 0.2 & 0 \\ 0 & 0 & 0.2
\end{bmatrix} + \Delta \begin{bmatrix}
    1.15 & 1.08 & 1 \\ 1.08 & 1.01 & 0.93 \\ 1 & 0.93 & 0.86
\end{bmatrix} : \Delta \in [0, 0.441] \right\}\]

Once again, as in the 2-block case, the smaller communities have a larger densifying rate (1.15 for the smallest community compared to 0.86 for the largest).

The property described in Observation~\ref{obs:lowrank_highorder} also holds in each of these examples. 





\newpage
\section{Unbalanced Densification in Real-World Graphs}
\label{appx:fb100}

We conduct a very simple experiment to show that the patterns of disproportionate densification are not simply an artifact of stochastic block model graphs: on real graphs, the same patterns occur, sometimes to an even \emph{greater} extent than in SBMs. 

\subsection{Data} 
We use the Facebook100 dataset \cite{traud2012social} (sourced from the Network Data Repository \cite{nr}), which consists of a set of social networks from 100 universities in the United States from September 2005. The data includes information about the class year of each node. The networks are generally strongly assortative by class year. 

For each network, we consider the induced subgraph consisting of nodes with class year of 2005 or 2008, which typically have the highest degree of homophily (due to the separation in class years). We denote 2005  community 1 and 2008 community 2. 

\subsection{Embedding and Link Prediction}
We embed each of the 100 networks using Algorithm~\ref{alg:embedding_generic}. For all networks, we embed into 6 dimensions, and subsample vertices uniformly in batches of $\max(200, n/15)$ (where $n$ is the size of the network). We run with the Adam for 10,000 iterations, with a dynamic learning-rate scheduler and early stopping when the learning rate falls below 1e-6. 

For link prediction, we assign each link $(i,j)$ a predicted probability of $p_{ij} = \sigma(\langle \omega_i, \omega_j\rangle)$, corresponding to the simple algorithm described in Section~\ref{sec:link-prediction}.

\subsection{Relative densification}

For the original network, we compute the matrix $P_\text{initial}$, where $P_\text{initial}[i,j]$ is the average edge density between groups $i$ and $j$ (that is, the number of edges where one end is in group $i$ and the other in $j$, divided by the total possible number of such edges). We compute the \emph{learned} densities by simply summing the probabilities $p_{ij}$ within each block and dividing by the number of such pairs, yielding $P_\text{learned}$.

Then, for each network, we compute its \emph{densification ratio} $d_r$: the ratio of densification in group 2 versus group 1.
\[d_r = \frac{P_\text{learned}[2,2] - P_\text{initial}[2,2]}{P_\text{learned}[1,1] - P_\text{initial}[1,1]} \]

Additionally, we compute its \emph{predicted} densification ratio using only $P_\text{initial}$. Given only the average densities of the original graph, we can use Theorem~\ref{thm:eqclasses} to compute $\eta$ corresponding to the equivalence class containing $P_\text{initial}$. We expect $d_r$ to be close to $\frac{1}{\eta^2}$. 

In fact, the empirical $d_r$ and the theoretically predicted $\frac{1}{\eta^2}$ correlate strongly, as shown in Fig.~\ref{fig:eta-rd-corr}. That is, networks that our theory predicts will have large disparate effects in link prediction do, in practice, have those large effects. This is remarkable given that the prediction is based on \emph{only three statistics of the original network}: the average density of 2005-2005, 2005-2008, and 2008-2008 edges. 

\begin{figure}
    \centering
    \includegraphics[width=0.5\linewidth]{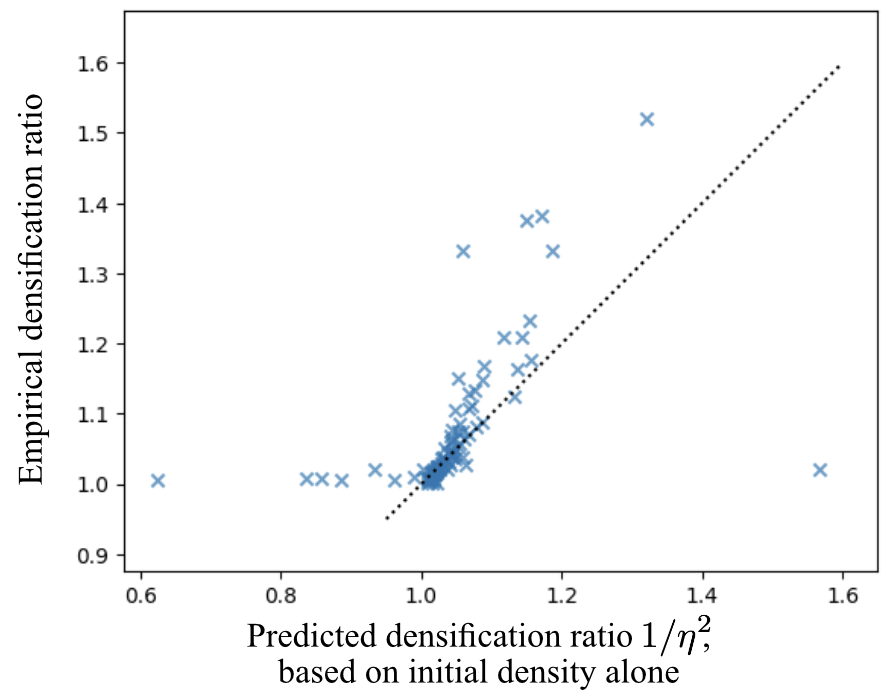}
    \caption{Correlation between predicted and empirical link over-prediction rates in the Facebook100 dataset.}
    \label{fig:eta-rd-corr}
\end{figure}

We also plot the empirical densification ratio against the size and sparsity differences of the initial network, grouped by the average degree of the network, shown in Fig.~\ref{fig:fb100-pattern}. In these real graphs, smaller and sparser communities are also consistenly over-densified relative to larger and denser ones.  

\begin{figure}
    \centering
        \begin{subfigure}[t]{0.49\linewidth}
        \includegraphics[width=0.95\linewidth]{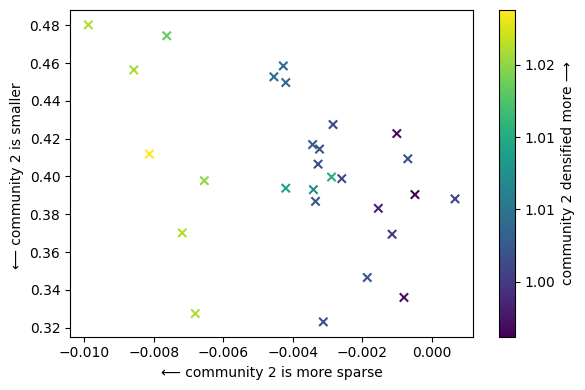}
        \caption{Networks with average edge density below 0.01}
    \end{subfigure}
    \hfill 
    \begin{subfigure}[t]{0.49\linewidth}
        \includegraphics[width=\linewidth]{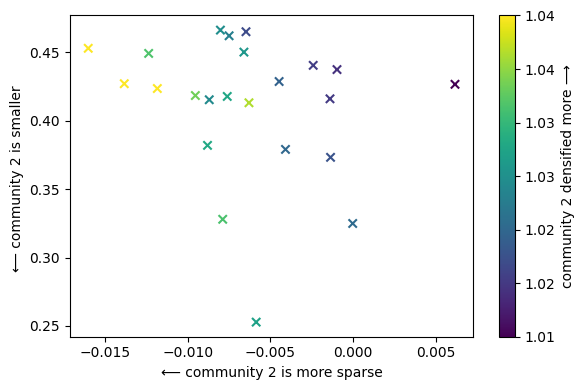}
        \caption{Networks with average edge density in [0.01, 0.02)}
    \end{subfigure}
    \vspace{1em}
    
    \begin{subfigure}[t]{0.49\linewidth}
        \vspace{0pt}
        \includegraphics[width=\linewidth]{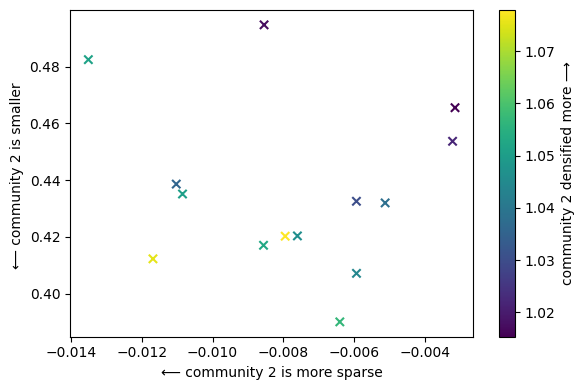}
        \caption{Networks with average edge density in $[0.02, 0.03)$}
    \end{subfigure}
    \caption{When embedding the Facebook100 networks, smaller and sparser communities are consistently over-densified. Here, we plot the densification ratio $d_r$ of each network (color) against the ratio of community sizes and difference in sparsity, stratified by the overall density of the graph.}
    \label{fig:fb100-pattern}
\end{figure}

\newpage 

\section{Community detection is robust except in $\Pi_s$}
\label{appx:tasks}
One common task is to do some form of classification or clustering of the nodes into communities. An easy consequence of Theorem~\ref{thm:embedding-limit-regimes} is that, in parameter regimes $\Pi_m$ and $\Pi_d$, the community information (that is, which nodes are in which communities) is fully encoded in the loss-minimizing embedding: this information can be read off from the block-constant structure of $K$. Only in region $\Pi_s$ is the community structure lost. 

This is qualitatively similar to the Kesten-Stigum information-theoretic threshold \cite{decelle2011asymptotic}, which describes conditions where SBM communities cannot be recovered from a \emph{single instance} of the graph at rates better than chance. The KS threshold similarly implies that heterophilic communities cannot be detected in certain cases, but detection scales with the size of the graph $n$ (unlike our result, which is explicitly based on the generating model of the graph rather than a single instance). 

Empirically, it is known that standard graph representation methods tend to do poorly on heterophilic graphs \cite{gao2018community}, and explicit algorithms for heterophilic graph learning tend to rely on somewhat specialized techniques; see \cite{luan2024heterophilic}. The information loss that we show in $\Pi_s$ gives additional theoretical justification for the poor performance of inner-product-based methods. 





\end{document}